\newcommand{\bd}{\begin{definition}}                
\newcommand{\ed}{\end{definition}}                  
\newcommand{\bc}{\begin{corollary}}                 
\newcommand{\ec}{\end{corollary}}                   
\newcommand{\bl}{\begin{lemma}}                     
\newcommand{\el}{\end{lemma}}                       
\newcommand{\bp}{\begin{proposition}}            
\newcommand{\ep}{\end{proposition}}                
\newcommand{\bere}{\begin{remark}}                  
\newcommand{\ere}{\end{remark}}                     
\newcommand{\bt}{\begin{theorem}}
\newcommand{\et}{\end{theorem}}
\newcommand{\be}{\begin{equation}}
\newcommand{\ee}{\end{equation}}
\newcommand{\bit}{\begin{itemize}}
\newcommand{\eit}{\end{itemize}}
\newtheorem{theorem}{Theorem}[section]
\newtheorem{corollary}[theorem]{Corollary}
\newtheorem{lemma}[theorem]{Lemma}
\newtheorem{proposition}[theorem]{Proposition}
\theoremstyle{definition}
\newtheorem{definition}[theorem]{Definition}
\theoremstyle{remark}
\newtheorem{remark}[theorem]{Remark}
\newcommand{\dd}{{\rm d}}
\begin{document}

\title{Destructuring Physics: A functional derivation of spacetime\footnote{Contribution to the Proceedings of the Eleventh International Workshop  DICE 2024,  Spacetime - Matter - Quantum Mechanics "Quo vadis, fisica?", Castello Pasquini, Castiglioncello (Italy),  September 16--20,  2024. J. Phys.: Conf. Ser. {\bf 3017} (2025) 012004, DOI:10.1088/1742-6596/3017/1/012004.}
}

\author{Ettore Minguzzi}

\affil{Dipartimento di Matematica, Universit\`a degli Studi di Pisa,  Via
B. Pontecorvo 5,  I-56127 Pisa, Italy. }

\email{ettore.minguzzi@unipi.it} $\empty$

\begin{abstract}
I propose that Physics should be formulated using minimal mathematical structure, beginning with its foundational arena: spacetime. This paper opens with a concise overview of several research directions explored in previous work. Among these are the proposal to represent spacetime at the quantum scale using (measure) closed ordered spaces; the unification of causality and topology through quasi-uniformities; the concept of the product trick to unify causality and metricity; the introduction of upper semi-continuous (stable) Lorentzian distances; the representation of spacetime via steep time functions; and the formulation of the Lorentzian distance formula.

Subsequently, the properties of the stable distance over stable spacetimes are used as a reference to propose a simplified, abstract notion of spacetime. The discussion shows that spacetime can be introduced in a general and minimalistic way as nothing more than a family of functions defined over an arbitrary set. This abstraction removes unnecessary mathematical complexity, reducing spacetime to its  essential elements while preserving its most fundamental physical  properties.
\end{abstract}

\section{Some Recollections}

This work is essentially an update of a contribution I made to this conference twelve years ago, where I first presented some ideas on representing spacetime using measure closed ordered spaces \cite{minguzzi13e}. Back then, inspired by causal set theory \cite{dowker03,surya19}, I argued that spacetime should be described using a mathematical structure capable of incorporating elements of discreteness---something hinted at by quantum mechanics---rather than relying solely on the notion of a smooth manifold. After all, it's likely that the unification of gravity and quantum mechanics will reveal the smooth manifold as merely an emergent phenomenon at low energy scales.

Since the Lorentzian metric can be reconstructed from the conformal factor and the distribution of light cones \cite[App.\ D]{wald84}, I explored the possibility that spacetime could  be represented by a spacetime measure and a closed order encoding causality (we'll get to why \emph{closed} matters in a moment). At high energies or low regularity, the concept of a cone distribution might break down---there might not even be a tangent bundle---but causality, understood as a relation, would still be present.

In that earlier paper, I not only motivated this line of inquiry physically but also explored the idea of unifying the closed order and topology into a single concept called a \emph{quasi-uniformity}. This notion, first discovered by Nachbin \cite{nachbin65}, is surprisingly natural---it is essentially a generalization of the standard (Weil) topological uniformity, obtained by dropping a symmetry axiom.

But before diving deeper, let me take a step back and explain what initially drew me to closed relations and the related notion of upper semi-continuous Lorentzian distance.


\subsection{Causality in General Relativity}

The global differential geometric study of spacetime really took off in the mid-1960s and early 1970s, following the publication of the Raychaudhuri equation in the 1950s. This was the era of groundbreaking contributions by Hawking, Geroch, Penrose, and Tipler, culminating in the singularity theorems by Penrose and Hawking, and the publication of Hawking and Ellis's seminal book \cite{hawking73}. The subsequent decades, including the 1980s and 1990s, were more about systematization, culminating in the publication of influential books by Beem, Ehrlich and Easley \cite{beem96}, and by O'Neill \cite{oneill83}.

While causality was central to many of these results---ultimately earning Penrose a Nobel Prize---most researchers in relativity at the time focused on more algebraic or tangible topics. My own interest in mathematical relativity began during my PhD, when I studied Hawking and Ellis's book. But I only truly entered the field by chance, after realizing that some results by Erasmo Caponio and Antonio Masiello \cite{caponio04d} on the Lorentz force equation---which I heard about at a conference---could be improved using Kaluza-Klein theory, a topic I had dabbled in earlier. To make progress, I needed to revisit the chapter on causality in Hawking and Ellis's book, and that's when I fell down the rabbit hole of global methods in mathematical relativity \cite{caponio03}. This fascination deepened when Miguel Sánchez invited me to co-author a review of causality theory \cite{minguzzi06c}.


\subsection{The Chronological vs.\ Causal Relation}

In the 1970s and 1980s, most researchers working on causality focused on the \emph{chronological relation}, which had nice topological properties---it was open and, in strongly causal spacetimes, the manifold topology could be described using chronological diamonds $I(p,q) = I^+(p) \cap I^-(q)$ (i.e., the Alexandrov topology). In contrast, the \emph{causal relation} $J$ was less well-behaved, often failing to be closed unless strong causality conditions like global hyperbolicity were imposed. For example, simply removing a point from Minkowski spacetime illustrates this issue.

One might wonder why this matters---after all, isn't spacetime globally hyperbolic, as suggested by the strong cosmic censorship conjecture? The problem is that exact solutions in general relativity can exhibit all sorts of causal pathologies. Penrose, for instance, showed that plane waves provide a family of examples where such pathologies arise \cite{penrose65,hubeny05}.
The idea of cosmic censorship is that causal pathologies should not form starting from a well behaved spacetime, not that  they cannot be present in exact solutions of Einstein's equation.
Ultimately, we need a mathematical framework capable of handling such pathologies before we can rule them out in specific physical scenarios. Unsurprisingly, axiomatic approaches to causality---dating back to the early days of relativity---almost always focused on the chronological relation rather than the causal one (except in trivial cases like Minkowski spacetime). This is why strong causality, closely tied to the chronological relation, was one of the most commonly imposed conditions in the 1970s and 1980s.


\subsection{A Shift in Perspective}

Things began to change, however. I became interested in weaker causality conditions after noticing that the causal ladder in Hawking and Ellis's book could be expanded to include new levels. Conditions related to the non-total or non-partial imprisonment of causal curves in compact sets, which were often treated as disconnected from other causality conditions, could actually be integrated into the causal ladder. This line of inquiry required developing tools that weren't realized to their full potential at the time, particularly limit curve theorems. Hints to what should have been done came from papers by Galloway and others. I worked  then on a general version of the limit curve theorem and realized that many proofs in the literature, which assumed strong causality, could actually be proven under the weaker assumption of non-total imprisonment. This was because non-total imprisonment had specific, desirable consequences when combined with the limit curve theorem---a cornerstone of causality theory. This hinted that the chronological relation might not be as central as previously thought.

Another clue came from the study of time functions---continuous functions that increase strictly along causal curves. Seifert \cite{seifert71} had introduced a  reflexive, transitive (i.e.\ a preorder) closed relation $J_S$ by intersecting the causal relations $J_{g'}$ for metrics $g'$ with wider causal cones than $g$: $J_S := \cap_{g'>g} J_{g'}$. Seifert's work was ahead of its time but unfortunately contained gaps and incomplete arguments. Later researchers, including myself, had to reprove or improve some of his claims. In particular, I presented a new proof \cite{minguzzi07} (see also \cite{hawking74})  that the antisymmetry of $J_S$ (making it an order) is equivalent to the existence of a time function---a property known as stable causality. This led me to suspect a more general result: the existence of an increasing function for some order should follow directly from the order being closed. This search eventually led me to Auslander-Levin's theorem \cite{auslander64,levin83} and to a body of results which at the time was mostly confined in the literature in mathematical economics \cite{minguzzi09c}. Most importantly, I became aware of Nachbin's work on closed preordered spaces \cite{nachbin65}, which turned out to be exactly what I needed.


\subsection{The Relevance of Closed Relations}

Nachbin's work reveals that the topological theory we learn in university is just a special case of a more general theory involving a triple $(M, \mathscr{T}, \le)$, where $\mathscr{T}$ is a topology and $\le$ is a preorder whose graph $G(\le) = \{(p,q): p \le q\}$ is closed in the product topology. Standard topology corresponds to the case where $G$  is  the diagonal $\Delta \subset M \times M$---the discrete order. The closure of the diagonal is nothing but the Hausdorff condition. This makes it clear why working with a closed relation is preferable: \emph{the closure of the relation generalizes the Hausdorff condition}. The Hausdorff condition itself is recovered from the closure of the preorder provided it is an order as in that case\footnote{If $R$ is a relation $R^{-1}=\{(x,y): (y,x)\in R\}$.} $G\cap G^{-1}=\Delta$ is closed, an equivalent formulation of Hausdorffness. A fully general topological theory can be developed for closed relations, with preorder analogs of standard topological properties (e.g., normal preorder regularity, complete preorder regularity, quasi-pseudo-metrizability), where continuous functions are replaced by continuous isotone functions. This type of generalization isn't possible with open relations.


\subsection{Which Closed Relation?}

Before diving deeper into the implications, let us revisit the issue of open versus closed relations. While the chronological relation $I$ is the obvious candidate for an open relation, there can be many choices for a closed relation containing $I$. In an axiomatic approach to causality, one would ideally like to show that any smooth spacetime has a single, natural closed preorder describing its causality.

We have already mentioned Seifert’s closed relation $J_S \supset J \supset I$, which is constructed using sequences of $g'$-timelike connecting curves for $g' > g$. However, another approach is to consider the smallest closed, reflexive, and transitive relation that includes $I$.
In globally hyperbolic spacetimes (and causally simple ones), this relation reduces to the ordinary causal relation $J$. However, since we aim to avoid imposing overly strong causality conditions, the smallest closed preorder containing $I$---denoted $K$---becomes particularly significant. This relation was introduced by Sorkin and Woolgar \cite{sorkin96}, who anticipated its importance in low-regularity settings. Their work was framed in the context of manifolds endowed with a $C^0$ Lorentzian metric, where the usual smoothness assumptions no longer hold.
%
%
%
%
%

Before any causal axiomatic approach could be developed, it was crucial to show that, in the smooth case, there is essentially only one natural closed relation. This meant proving the equivalence of the $K$-relation with Seifert’s $J_S$-relation. I demonstrated this \cite{minguzzi08b,minguzzi09c} by showing that the antisymmetry of $J_S$ is equivalent to the antisymmetry of $K$, and that, in this case, $K = J_S$. Later, it was established that, under the same assumptions, the manifold topology could be recovered from $K$ in a purely order-theoretic fashion \cite{ebrahimi15,sorkin19}.

\subsection{Quasi-Uniformity and the Unification of Topology and Order}

Perhaps the most fascinating aspect of Nachbin’s theory of closed preordered spaces is the concept of {\em quasi-uniformity}. A quasi-uniformity is a filter $\mathscr{U}$ on the diagonal of $M \times M$ that satisfies the condition: for every $U \in \mathscr{U}$, there exists $V \in \mathscr{U}$ such that $V \circ V \subset U$. Unlike a uniformity, there’s no symmetry condition: if $U \in \mathscr{U}$ then $U^{-1} \in \mathscr{U}$. As a result, $G = \bigcap \mathscr{U}$ is a preorder on $M$,  while the symmetrized uniformity $\mathscr{U}^* = \{ U \cap U^{-1} : U \in \mathscr{U} \}$ induces a topology $\mathscr{T}(\mathscr{U}^*)$. In essence, every quasi-uniformity induces a topological preordered space $(X,\mathscr{T}(\mathscr{U}^*), \bigcap \mathscr{U})$, where the topology is Hausdorff if and only if the preorder is an order.

This suggests that quasi-uniformity unifies topology and order into a single structure. In the context of spacetime, this means that the topology of spacetime and the causal order might both emerge from a single object: the spacetime quasi-uniformity.

To support this claim, one must first prove that smooth, stably causal spacetimes are quasi-uni\-for\-mi\-za\-ble, that is, they admit a quasi-uniformity from which both the topology and the  order $K(=J_S)$ can be derived. Nachbin had already shown that quasi-uni\-for\-mi\-za\-bi\-li\-ty is equivalent to {\em complete preorder regularity} \cite[Thm.\ 9, p.\ 67]{nachbin65}. A closed preordered space $(M, \mathscr{T}, \le)$ is completely regularly preordered if the topology and preorder can be recovered from the set of continuous isotone functions (where {\em isotone} means $x \le y \Rightarrow f(x) \le f(y)$). In other words, \( \mathscr{T} \) is the initial topology of the family of continuous isotone functions, and $x \le y$ if and only if $f(x) \le f(y)$ for every continuous isotone function $f$.

\subsection{Challenges in Preorder Topology}

In standard topology (where the order is discrete, i.e., $\Delta$), we have the following hierarchy of properties:
\[
\textrm{normal} \Rightarrow \textrm{completely regular (Tychonoff)} \Rightarrow \textrm{regular},
\]
\[
\textrm{locally compact and Hausdorff} \Rightarrow \textrm{completely regular (Tychonoff)},
\]
\[
\textrm{second countable and regular} \Rightarrow \textrm{metrizable}.
\]
In the preorder version of the theory, we start with the closed relation condition (analogous to the Hausdorff property) and hope to climb the ladder of topological properties using local compactness or second countability—assumptions implicit in the definition of a manifold. However, none of the above implications hold in the preorder version of the theory.

I spent some years working on this problem and eventually made progress. The first step was to prove the following result: {\em Every locally compact, $\sigma$-compact, closed preordered space is a normally preordered space.} In a normally preordered space, the preorder is indeed represented by the set of continuous isotone functions, but this does not necessarily hold for the topology. The missing ingredient is {\em convexity}: the topology must be generated by open future sets and open past sets (i.e., they form a subbasis).

In the context of topological preordered spaces, the  {\em increasing hull} (future) of a subset $S$ is denoted $i(S)$, while the {\em decreasing hull} (past) is denoted $d(S)$. A set is {\em increasing} (future) if $i(S) = S$, and {\em decreasing} (past)  if $d(S) = S$. A subset $C$ is {\em convex} if $C = i(C) \cap d(C)$—equivalently, it is the intersection of an increasing and a decreasing set. Every completely preordered space is convex, and with convexity, normally preordered spaces descend to completely regularly preordered spaces.

Unfortunately, convexity is a global property and should not be confused with the weaker property of {\em weak convexity}---the topology admits a basis of open convex sets---or with the even weaker property of {\em local convexity}---the neighborhood system at any point is given by convex neighborhoods. While convexity implies weak convexity, which in turn implies local convexity, the converse implications do not necessarily hold. This is because not all convex neighborhoods are open, and not all open convex sets are the intersection of an open increasing set and an open decreasing set.

Ultimately, I found conditions that guarantee local convexity and then convexity. The first result states that if the topology is locally compact and $\sigma$-compact (as for manifolds), local convexity is promoted to convexity and to preorder normality and hence to quasi-uniformizability \cite[Cor.\ 2.14]{minguzzi12d}. In order to obtain the missing property of local convexity we can either assume that the preorder is $k$-preserving, namely that the convex hull of compact sets is compact  (a property which should be understood, if the preorder is an order, as the analog of global hyperbolicity for closed ordered spaces) \cite[Thm.\ 3.3]{minguzzi12d} or that the topology and preorder are determined from local information (compactly generated) \cite[Cor.\ 4.14]{minguzzi12d}. The last result shows that in the spacetime manifold case $(M,\mathscr{T}, K)$ is quasi-uniformizable, where $\mathscr{T}$ is the manifold topology and $K$ is the Sorkin-Woolgar relation. Under stable causality, this is the same as $(M,\mathscr{T}, J_S)$ which is a closed ordered space.  In the globally hyperbolic case one can get that $(M,\mathscr{T}, J)$ is quasi-uniformizable either using $K=J$ or using the mentioned result for $k$-preserving closed ordered spaces.

\subsection{Which Quasi-Uniformity?}

Having established that a locally compact $\sigma$-compact closed ordered space which is $k$-preserving (or compactly generated) admits a quasi-uniformity, the next question is whether there exists some canonical spacetime quasi-uniformity. It turns out that the answer is affirmative, but we need to add a further ingredient: the Lorentzian distance.

Recently, in joint work with Stefan Suhr and Aleksei Bykov \cite{minguzzi22,minguzzi24b} we introduced the notion of {\em Lorentzian metric space (without chronological boundary)} which is a set $X$ endowed with a function (Lorentzian distance) $d: X\times X\to [0,\infty)$, such that
\begin{itemize}
 \item[(i)] it satisfies the reverse triangle inequality for chronologically related triples of events $x\ll y \ll z$ (where $I=\{d>0\}$ is the chronological relation),
 \item[(ii)]
     the chronological diamonds\footnote{We denote $d_p:=d(p,\cdot)$, $d^p=d(\cdot, p)$.} $I(p,q):=\{r: d_p(r) d^q(r)>0\}$,   are relatively compact in a topology $\mathscr{T}$ for which $d$ is also continuous, and
 \item[(iii)]
     $d$ (weakly) distinguishes events: `$d_p=d_q$ and $d^p=d^q$' implies $p=q$,
     \item[(iv)] $X=I(X)$, namely every points is contained in a chronological diamond.
\end{itemize}
 The topology is actually unique, Hausdorff, locally compact, and it coincides with the initial topology of the functions $\{d_p, d^p , p\in X\}$. It is finer than the Alexandrov topology and not necessarily coincident with it (coincidence follows under additional assumptions such as $p\in \overline{I^\pm (p)}$ for every $p\in X$). The causal relation $J$ is derived from
 \begin{equation} \label{jrpo}
 J:=\{(p,q): d_p\ge d_q \ \textrm{and} \  d^p\le d^q \}.
 \end{equation}
Globally hyperbolic smooth spacetimes $(M,g)$ provide the main example of Lorentzian metric spaces, which can indeed be understood as the low regularity versions of globally hyperbolic spacetimes. The family of Lorentzian metric spaces is so general that it comprises causal sets (type of oriented graphs with weighted edges). Lorentzian (pre)length spaces are obtained by adding a further axiom that expresses the connectedness  of chronologically related events by suitably defined causal curves.

It turns out that in a Lorentzian metric space the triple $(X,\mathscr{T}, J)$ is a $k$-preserving closed ordered space which admits two canonical quasi-uniformities \cite{minguzzi24b}.
The {\em (non-fine)  quasi-uniformity} is generated by sets of the following form
\begin{equation} \label{copp}
\{(p,q): (d_q(z)-d_p(z))<a, \quad (d^p(z)-d^q(z))<b \},
\end{equation}
for $z\in X$, $a,b\in (0,+\infty]$.
The {\em fine quasi-uniformity} is generated by sets of the following form
\begin{equation} \label{cypp}
\{(p,q): \sup_K(d_q-d_p)<a, \quad \sup_K(d^p-d^q)<b \},
\end{equation}
for $K$ compact set and $a,b\in (0,+\infty]$ (thanks to (ii) and (iv) the compact set could be replaced by the finite union of chronological diamonds to make manifest that this quasi-uniformity follows directly from information in $d$).

If the Lorentzian metric space is {\em countably generated}, namely
\begin{itemize}
\item[(v)] there is a countable subset $\mathcal{S}\subset X$ such that for every $z\in X$ there are $x,y\in \mathcal{S}$ such that $x\ll z \ll y$,
\end{itemize}
then the topology is $\sigma$-compact, second countable and Polish, and the fine quasi-uniformity is quasi-metrizable. I shall not enter into this property here since, in any case, the quasi-metric, contrary to the quasi-uniformity, is not canonical.

The representability via a canonical quasi-uniformity holds for much more general objects than Lorentzian metric spaces (see also \cite[Thm.\ 9, p. 69]{nachbin65})

\begin{proposition}
Let $(X, \mathscr{T},\le, \mathcal{F})$ be a quadruple where $X$ is a set, $\mathscr{T}$ a topology, $\le$ a preorder, and $\mathcal{F}$ a family of real function on $X$. The compatibility conditions
\begin{center}
\begin{tabular}{r l}
topology-preorder: & the preorder $\le$ is $\mathscr{T}\times \mathscr{T}$-closed, \\
preorder-function:  & the order is represented by the functions: $p\le q$ iff $f(p) \le f(q)$ for all $f \in \mathcal{F}$, \\
function-topology: & $\mathscr{T}$ is the initial topology of $\mathcal{F}$,
\end{tabular}
\end{center}
hold if and only if  the topological preordered space $(X, \mathscr{T},\le)$ admits the quasi-uniformity  generated by sets of the form
\[
\{(p,q): f(p)-f(q)<b\},
\]
where  $f\in \mathcal{F}$, $b>0$.
\end{proposition}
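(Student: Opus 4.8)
The plan is to recognise that the sets
\[
U_{f,b}:=\{(p,q)\in X\times X:\ f(p)-f(q)<b\},\qquad f\in\mathcal F,\ b\in(0,\infty),
\]
generate a quasi-uniformity $\mathscr U_{\mathcal F}$ on $X$ \emph{without using any of the three compatibility conditions}, and that, once this is observed, the displayed conditions are nothing but the unfolding, term by term, of the single assertion ``$\mathscr U_{\mathcal F}$ induces the topological preordered space $(X,\mathscr T,\le)$''. So I would split the argument into a structural part (computing, with no hypotheses, the induced preorder $\bigcap\mathscr U_{\mathcal F}$ and the topology $\mathscr T(\mathscr U_{\mathcal F}^{*})$ of the symmetrised uniformity) and a matching part (reading off the equivalence).

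\emph{Structural part.} First I would verify the following, using none of the three conditions. Each $U_{f,b}$ contains the diagonal $\Delta$ (as $f(p)-f(p)=0<b$), so the finite intersections of the $U_{f,b}$ are nonempty and form the base of a filter $\mathscr U_{\mathcal F}$ on $X\times X$ with $\Delta\subset U$ for every $U\in\mathscr U_{\mathcal F}$; given $U\supset\bigcap_{i=1}^{n}U_{f_i,b_i}$, the entourage $V:=\bigcap_{i=1}^{n}U_{f_i,b_i/2}$ satisfies $V\circ V\subset U$ — if $(p,r),(r,q)\in V$ then $f_i(p)-f_i(q)=(f_i(p)-f_i(r))+(f_i(r)-f_i(q))<b_i$ for each $i$ — so $\mathscr U_{\mathcal F}$ is a quasi-uniformity. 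Next, $\bigcap\mathscr U_{\mathcal F}=\bigcap_{f,b}U_{f,b}=\{(p,q):\ f(p)-f(q)<b\ \text{for all }f\in\mathcal F,\ b>0\}=\{(p,q):\ f(p)\le f(q)\ \text{for all }f\in\mathcal F\}$, the last equality since ``$f(p)-f(q)<b$ for all $b>0$'' is just ``$f(p)\le f(q)$''. Finally, the basic entourages of $\mathscr U_{\mathcal F}^{*}$ are the sets $\bigcap_{i}\bigl(U_{f_i,b_i}\cap U_{f_i,b_i}^{-1}\bigr)=\bigcap_{i}\{(p,q):\ |f_i(p)-f_i(q)|<b_i\}$, which are precisely the basic entourages of the (Weil) uniformity generated by the pseudometrics $\rho_f(p,q):=|f(p)-f(q)|$; hence $\mathscr T(\mathscr U_{\mathcal F}^{*})$ is the initial topology determined by $\mathcal F$.

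\emph{Matching part.} Assume the three conditions. The preorder-function condition reads $G(\le)=\{(p,q):\ f(p)\le f(q)\ \forall f\in\mathcal F\}$, which by the structural part is $\bigcap\mathscr U_{\mathcal F}$; the function-topology condition reads $\mathscr T=$ initial topology of $\mathcal F$, which by the structural part is $\mathscr T(\mathscr U_{\mathcal F}^{*})$. Hence $\mathscr U_{\mathcal F}$ is a quasi-uniformity whose induced preorder is $\le$ and whose induced topology is $\mathscr T$, i.e.\ the space admits it; notice the topology-preorder condition is not even needed in this direction (it is automatically implied). Conversely, if $(X,\mathscr T,\le)$ admits $\mathscr U_{\mathcal F}$, then $\bigcap\mathscr U_{\mathcal F}=G(\le)$ and $\mathscr T(\mathscr U_{\mathcal F}^{*})=\mathscr T$ by definition, and the two identities of the structural part turn these into preorder-function and function-topology; to recover topology-preorder I would invoke the standard fact that in any quasi-uniform space $\bigcap\mathscr U$ is closed in $\mathscr T(\mathscr U^{*})\times\mathscr T(\mathscr U^{*})$: iterating the refinement axiom gives, for each $U\in\mathscr U$, a $V\in\mathscr U$ with $V\circ V\circ V\subset U$, and one checks $\overline V\subset V\circ V\circ V$ (closure in $\mathscr T(\mathscr U^{*})\times\mathscr T(\mathscr U^{*})$), since a point in the closure can be joined, on each coordinate separately, to a point of $V$ by an arbitrarily small $\mathscr U^{*}$-entourage and then composed back into $V\circ V\circ V$; so $\bigcap\mathscr U=\bigcap_{U}\overline{V_U}$ is closed. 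Applied to $\mathscr U_{\mathcal F}$ and combined with $\mathscr T(\mathscr U_{\mathcal F}^{*})=\mathscr T$, this yields topology-preorder and closes the proof.

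The filter/refinement verification and the identification of $\mathscr T(\mathscr U_{\mathcal F}^{*})$ with the weak topology are routine; the step I expect to be the real obstacle, and the one I would write out in full, is the closedness lemma $\overline V\subset V\circ V\circ V$, where the absence of a symmetry axiom has to be handled with care — one must approximate from the appropriate side in each of the two coordinates. This is precisely Nachbin's observation that the preorder induced by a quasi-uniformity is automatically closed, which is what turns topology-preorder into a consequence rather than a third independent hypothesis.
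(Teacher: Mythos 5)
Your proposal is correct and follows essentially the same route as the paper's proof: identify the intersection of the quasi-uniformity with the graph of the function-induced preorder, identify the symmetrized uniformity's topology with the initial topology of $\mathcal{F}$, and obtain the topology-preorder condition from the standard fact that the preorder induced by a quasi-uniformity is closed. You simply write out in full the verifications (filter and composition axioms, the closedness lemma) that the paper leaves implicit, and your side remark that the topology-preorder condition is redundant given the other two is consistent with the paper's argument.
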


\begin{proof}
To the right,  the intersection of all the elements of the quasi-uniformity is $\{(p,q): f(p)-f(q)\le 0\}$ which is the graph of $\le$ by the second condition. The symmetrized uniformity is generated by the sets of the form $\{(p,q): \vert f(p)-f(q)\vert <b\}$ which is the initial topology of $\mathcal{F}$ hence it coincides with $\mathscr{T}$ by the third condition.

To the left, the second and third properties are clear. The generated order $\le$ is closed became this is a standard property of the preorder induced by a quasi-uniformity.
\end{proof}

The following is a kind of specialization to $\mathcal{F}=\{d_z, -d^z: z\in X\}$.
\begin{proposition}
Let $(X, \mathscr{T},\le, d)$ be a quadruple where $X$ is a set, $\mathscr{T}$ a topology, $\le$ a preorder, and $d: X\times X\to [0,\infty)$ a function. Suppose that the following compatibility conditions hold
\begin{center}
\begin{tabular}{r l}
topology-preorder: & the preorder $\le$ is $\mathscr{T}\times \mathscr{T}$-closed, \\
preorder-function:  &The graph $J$ of $\le$ is expressed by formula (\ref{jrpo}), \\
function-topology: &$\mathscr{T}$ is the initial topology of the functions $\{d_z, d^z: z\in X\}$,
\end{tabular}
\end{center}
then the closed preordered space $(X, \mathscr{T},\le)$ admits the quasi-uniformity (\ref{copp}). Furthermore, if $d$ is $\mathscr{T}\times \mathscr{T}$-continuous then $(X, \mathscr{T},\le)$ admits the fine quasi-uniformity (\ref{cypp}).
\end{proposition}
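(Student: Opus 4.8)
The plan is to read off the first claim from the preceding Proposition by taking $\mathcal{F}=\{d_z:z\in X\}\cup\{-d^z:z\in X\}$, and then to deal with the ``fine'' addendum directly. First I would check that the three hypotheses here become the three hypotheses of the preceding Proposition for this $\mathcal{F}$. The topology--preorder condition is literally the same. For the preorder--function condition I would unwind ``$f(p)\le f(q)$ for every $f\in\mathcal{F}$'': this splits into ``$d_z(p)\le d_z(q)$ for every $z$'', i.e.\ $d^p\le d^q$, together with ``$-d^z(p)\le -d^z(q)$ for every $z$'', i.e.\ $d_p\ge d_q$, which is exactly $(p,q)\in J$ with $J$ as in (\ref{jrpo}). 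For the function--topology condition I would note that $t\mapsto -t$ is a homeomorphism of $\mathbb{R}$, so the initial topology of $\{d_z,-d^z\}$ is that of $\{d_z,d^z\}$, namely $\mathscr{T}$. Hence the preceding Proposition applies and produces the quasi-uniformity generated by the sets $\{(p,q):f(p)-f(q)<b\}$, $f\in\mathcal{F}$, $b>0$.

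To identify this with (\ref{copp}) I would observe that the generator attached to $f=d_z$ is $\{(p,q):d^p(z)-d^q(z)<b\}$ and the one attached to $f=-d^z$ is $\{(p,q):d_q(z)-d_p(z)<b\}$; their intersection over a common $z$ is exactly a generator of (\ref{copp}), and conversely each of them is the generator of (\ref{copp}) with $a$, resp.\ $b$, equal to $+\infty$. So the two families have a common base and generate the same quasi-uniformity; this is the first assertion.

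For the fine quasi-uniformity I would verify the axioms by hand for the filter $\mathscr{V}$ generated by the sets (\ref{cypp}). Each generator contains the diagonal ($\sup_K 0=0<a$), and the filter property is automatic. For the composition axiom, given the generator $U$ with data $(K,a,b)$ I would take $V$ the generator with data $(K,a/2,b/2)$ and use subadditivity of $\sup_K$ on $d_r-d_p=(d_r-d_q)+(d_q-d_p)$, and likewise in the past slot, to get $V\circ V\subset U$. The induced preorder $\bigcap\mathscr{V}$ equals the intersection of the generators; specialising $K$ to a singleton $\{z\}$ forces $d_q\le d_p$ and $d^p\le d^q$ pointwise, i.e.\ $(p,q)\in J$, while conversely every $(p,q)\in J$ lies in all generators, so $\bigcap\mathscr{V}=G(\le)$ (closed, as for any preorder induced by a quasi-uniformity).

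The only substantial point is that the symmetrized uniformity $\mathscr{V}^{*}$ induces $\mathscr{T}$. One inclusion is formal: since (\ref{cypp}) with singleton $K$ recovers the generators of (\ref{copp}), $\mathscr{V}$ refines the quasi-uniformity of the first part, so $\mathscr{T}(\mathscr{V}^{*})$ is at least as fine as $\mathscr{T}$. For the reverse inclusion I would fix $p_0$, a $\mathscr{T}$-compact $K$ and $a>0$ and show that $W:=\{q:\sup_{z\in K}|d(q,z)-d(p_0,z)|<a\}$ is a $\mathscr{T}$-neighbourhood of $p_0$ (and likewise $W':=\{q:\sup_{z\in K}|d(z,q)-d(z,p_0)|<b\}$ for the past functions): this is where the joint $\mathscr{T}\times\mathscr{T}$-continuity of $d$ enters, via a finite-subcover argument --- for each $z_0\in K$ choose a product neighbourhood $N_{z_0}\times M_{z_0}$ of $(p_0,z_0)$ on which $d$ stays within $a/3$ of $d(p_0,z_0)$ and, after shrinking $M_{z_0}$, on which also $d(p_0,\cdot)$ stays within $a/3$ of $d(p_0,z_0)$; cover $K$ by finitely many $M_{z_i}$ and set $N=\bigcap_i N_{z_i}$. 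Then every basic $\mathscr{V}^{*}$-neighbourhood of a point is a $\mathscr{T}$-neighbourhood, so $\mathscr{T}(\mathscr{V}^{*})\subseteq\mathscr{T}$ and the two topologies coincide. I expect this last step --- upgrading pointwise (joint) continuity of $d$ to uniform control on compacta --- to be the only place where genuine work is needed; the rest is bookkeeping and the translation to the preceding Proposition.
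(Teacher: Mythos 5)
Your argument is correct, and for the first claim it is essentially the paper's own route: the paper explicitly frames this proposition as the specialization of the preceding one to $\mathcal{F}=\{d_z,-d^z:z\in X\}$, and its proof consists precisely of the two verifications you perform --- that the intersection of the filter is $\{(p,q):d_q-d_p\le 0 \text{ and } d^p-d^q\le 0\}$, i.e.\ the graph of $J$, and that the symmetrization generates the initial topology of $\{d_z,d^z\}$, which is $\mathscr{T}$ by hypothesis. Your bookkeeping identifying the two generating families (each generator of one being a finite intersection of generators of the other, via $a=+\infty$ or $b=+\infty$) is sound. Where you genuinely diverge is the fine quasi-uniformity: the paper discharges this entirely by citing the proof of an external theorem (Thm.\ 7.13 of the Bykov--Minguzzi--Suhr preprint), whereas you supply a self-contained argument. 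Your verification of the composition axiom by halving $a,b$ and subadditivity of $\sup_K$, the identification of $\bigcap\mathscr{V}$ with $G(\le)$ via singleton $K$, and above all the finite-subcover (tube-lemma) step upgrading joint continuity of $d$ to uniform closeness over compact $K$ --- which is indeed the one place the continuity hypothesis is essential, and explains why it is assumed only for the fine case --- are all correct. The net effect is that your write-up is more complete than the paper's, at the cost of redoing an argument the author prefers to import.
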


Observe that $d$ need not satisfy a reverse triangle inequality, for this reason we do not use the word {\em Lorentzian distance} and we prefer to use the generic word {\em function}. Still $J$ defined by  (\ref{jrpo}) is a preorder.
\begin{proof}
Indeed, the intersection of all the elements of the quasi-uniformity is $\{(p,q):  d_q- d_p \le 0 \ \textrm{ and } \ d^p-d^q \le 0\}$ which is the expression for $J$. The symmetrization of the quasi-uniformity is the uniformity generated by
\begin{equation}
\{(p,q): \vert d_q(z)-d_p(z)\vert <a, \quad \vert d^p(z)-d^q(z)\vert<b \}=\{(p,q): \vert d^z(q)-d^z(p)\vert <a, \quad \vert d_z(p)-d_z(q)\vert<b \},
\end{equation}
for arbitrary $z\in X$, $a,b>0$, which is the initial topology of the functions  $\{d_p, d^p: p\in X\}$.

The proof of the last statement goes as the proof of \cite[Thm.\ 7.13]{minguzzi24b}.
\end{proof}

\subsection{Consequences of the Existence of a Quasi-Uniformity: Compactification and Completion}

A closed ordered space $(X,\mathscr{T}, \le)$ is completely regularly ordered iff it is quasi-uniformizable with Hausdorff topology \cite{nachbin65}. For these spaces there is a standard procedure due to Nachbin, which generalizes the Stone–Čech compactification of Tychonoff spaces, hence called {\em Nachbin compactification}. I suggested that spacetimes could be compactified using this procedure in   \cite{minguzzi12d,minguzzi13e,minguzzi18b} \cite[Sec.\ 2]{minguzzi11b}.

It consists in considering the compact space $[0,1]^{\mathcal{F}}$ where $\mathcal{F}$ is the family of functions $f: X\to [0,1]$ which are continuous and isotone. The space $X$ can be mapped into it via the evaluation map $e: X\to [0,1]^{\mathcal{F}}$ such that $\pi_f(e(x))=f(x)$. The closure of the image is the Nachbin compactification $n X$. Once endowed with the topology and order induced from $[0,1]^{\mathcal{F}}$ it becomes a compact closed ordered space. The set $X$ is dense in $n X$. This order compactification is characterized by the property that every bounded continuous isotone function can be extended to the compactification. An interesting feature of this compactification is that the order is extended while remaining closed, a desirable property which does not seem to be shared by other construction that have been proposed in general relativity. The compactification $nX$ is, in a well defined sense, the largest possible compactification in a closed ordered space \cite[p.\ 81]{fletcher82}. All the other compactifications in a closed ordered space are obtained from $nX$ via identification of points (quotient). Naturally, we can call  $n X\backslash X$ the {\em boundary} of spacetime. Still, it is somewhat large so one can consider other methods to attach a boundary to a spacetime.

Observe that we have shown not only that in most cases of interest the topological ordered space of spacetime is quasi-uniformizable, we have also shown that in many cases the quasi-uniformity is canonical and derived from the Lorentzian distance $d$. This opens the possibility of completing the spacetime via a canonical procedure of completion for the quasi-uniformity (there are two canonical quasi-uniformities so there will be corresponding completion procedures depending on which one is chosen). There are several approaches but the simplest seems to be  Fletcher and Lindgren's {\em bicompletion} \cite{fletcher82}.

In the remainder of this subsection we shall only consider quasi-uniformities that induce Hausdorff topologies.
Let us denote with $\mathscr{U}:=\mathscr{Q}^*$ the uniformity induced by the quasi-uniformity $\mathscr{Q}$. The quasi-uniformity is said to be {\em bicomplete} if $\mathscr{U}$ is complete.
If it is not, it can be bicompleted to a quasi-uniform space which is bicomplete \cite[Thm.\ 3.33]{fletcher82}. The interesting fact is that the bicompletion is unique and accomplishes not only  the extension of the uniformity (this would be a more or less standard result) but also the extension of the quasi-uniformity and hence of the order associated with it.

Let us summarize the construction of this completion. A filter $\mathscr{F}$ of subsets of $X$  is said to be a {\em $\mathscr{U}$-Cauchy filter} if for every $U\in \mathscr{U}$ we can find $F\in \mathscr{F}$ which is $U$-small, that is $F\times F\subset U$. For instance, any sequence $\{x_k\}$ defines a filter $F$ including the subsets of $X$ that contain tails of the sequence.  For the quasi-uniformity (\ref{copp}) the sequence is Cauchy if for every $z\in X$, and $a,b>0$, we can find a positive integer $N$ such that for $k, k'\ge N$,
\[
\vert d^z(x_{k'})-d^z(x_k)\vert <a, \quad  \vert d_z(x_k)-d_z(x_k')\vert <b.
\]
For the fine quasi-uniformity (\ref{cypp}) the point $z$ is replaced by a comapct set $K$, and the previous conditions by
\[
\sup_K \vert d_{x_{k'}}-d_{x_k}\vert <a, \quad  \sup_K \vert d^{x_k}-d^{x_k'}\vert <b.
\]
A convergent filter is a  $\mathscr{U}$-Cauchy filter, $\mathscr{F}$ being given by the sets including the point of convergence.  A {\em minimal}  $\mathscr{U}$-Cauchy filter is one that does not properly contain any other  $\mathscr{U}$-Cauchy filter. Every  $\mathscr{U}$-Cauchy filter contains a unique minimal $\mathscr{U}$-Cauchy filter which, furthermore admits a base of $\mathscr{T}(\mathscr{U})$-open sets.

A function $f: (X, \mathscr{Q})\to  (X', \mathscr{Q}')$ is quasi-uniformly continuous if for every $Q\in \mathscr{Q}$ there is $Q'\in \mathscr{Q}'$ such that $(f\times f)^{-1}(Q')\subset Q$. A quasi-unimorphism is a quasi-uniformly continuous bijection together with its inverse. A bicompletion of a quasi-uniform space $(X,\mathscr{Q})$ is a bicomplete quasi-uniform space $(Y,\mathscr{P})$ that has a $\mathscr{T}(\mathscr{P}*)$-dense subspace quasi-unimorphic to $(X,\mathscr{Q})$.

Fletcher and Lindgren's bicompletion $(\tilde X, \tilde{\mathscr{Q}})$ is constructed as the set of minimal $\mathscr{U}$-Cauchy filters by extending the quasi-uniformity to this set. The base for the extended quasi-uniformity is given by sets of the form
\[
\tilde Q:=\Big\{(\mathscr{F},\mathscr{G}): \exists F\in \mathscr{F}, G\in \mathscr{G}  \textrm{ such that } F\times G\subset Q \Big\},
\]
where $Q\in \mathscr{Q}$. As mentioned the bicompletion is essentially unique as any other bicompletion is quasi-unimorphic to Fletcher and Lindgren's bicompletion $(\tilde X, \tilde{\mathscr{Q}})$, see \cite[Thm.\ 3.34]{fletcher82}. In conclusion, under very mild conditions the spacetime can be attached a canonical boundary via (bi)completion. For space reasons we only sketched the theory. I postpone to future work a  more detailed analysis.

\subsection{The (stable) upper semi-continuous Lorentzian distance}
In order to deal with the metric aspects of spacetime, i.e.\ with proper time, it seems natural, in any  abstract approach to spacetime, to have at one's disposal a Lorentzian distance $d$.

The question is: what properties should be imposed on $d$ in an abstract setting? It seems worth investigating this question by first looking at what happens in a low regularity - weak causality theory that yet makes the assumption that the spacetime is a manifold. As a reference we can consider the theory of closed/proper Lorentz-Finsler spaces developed in \cite{minguzzi17}. In that framework the causal cones are provided by an upper semi-continuous distribution $x \mapsto C_x \subset T_xM\backslash 0$ of closed sharp non-empty convex cones, in other words a convex sharp closed subbundle of the slit tangent bundle (so defining a so called closed cone structure). The metric aspects are defined by a Finsler fundamental function $F: C_x \to [0,\infty)$ satisfying some properties which are best introduced by using the {\em product trick}. The idea is to study the spacetime $M^\times =M\times \mathbb{R}$, define on it a translationally invariant  cone structure by making use of function $F$, and then impose that such cone distribution is upper semi-continuous.
  At each point $P=(p,z)$ of $M^\times $ the cone $C_P^\downarrow$  to be considered is the subset of $T_pM\backslash 0\times \mathbb{R}$ given by the hypograph of the function (there is also a symmetrized version that we shall not recall here) \cite[Eq.\ (3.16)]{minguzzi17}
 \begin{equation} \label{curx}
 F^\downarrow=
 \begin{cases}
F(p),& \text{if } v \in C_p , \\
-\infty, & \text{if } v \notin C_p.
\end{cases}
 \end{equation}
 By definition, the triple $(M,C,F)$ is a {\em closed Lorentz-Finsler space} if $(M^\times, C^\downarrow)$ is a closed cone structure.  The notion of {\em proper Lorentz-Finsler space}  has an additional assumption that prevents the cones and function $F$ from collapsing in some directions. Within the latter more specialized structure it is possible to introduce the chronological relation, which, in general is not present.

For a closed Lorentz-Finsler space, or better for its closed cone structure, we can still define the property of stable causality as the stability of causality under enlargements of cones, we can still define the Seifert relation $J_S$, prove that its antisymmetry is equivalent to stable causality and to the existence of time functions and that in this case $K=J_S$ \cite[Thm.\ 3.16]{minguzzi17}. We can still define global hyperbolicity, with the caveat that the compactness of causal diamonds should be replaced by the preservation of compactness under the causally convex hull. In fact, we can recover the ladder  of causality properties, a result which suggests that the chronological relation could be less important than previously credited.

Also, on the metric side, it can be shown   that in globally hyperbolic spacetimes  the  Lorentzian distance is finite \cite[Prop.\ 2.26]{minguzzi17} and upper semi-continuous \cite[Thm.\ 2.60(g)(d)]{minguzzi17}. Now, to get the full continuity of $d$ it is necessary to work with more specialized objects, namely proper Lorentz-Finsler spaces and also assume stronger regularity properties (local Lipschitzness) \cite[Thm.\ 2.53]{minguzzi17}, see however \cite{ling24} for a solution in the $C^0$ setting. So, if we are under causality conditions weaker than global hyperbolicity, the properties of upper semi-continuity and finiteness are lost and that of lower semi-continuity requires some additional assumptions. It is quite unpleasant to work with non-continuous functions, which is why, in the development of the abstract notion of Lorentzian metric space \cite{minguzzi22,minguzzi24b} we imposed the continuity and finiteness conditions along with the compactness condition on diamonds: in that theory we are building low regularity versions of global hyperbolicity which is why we can expect continuity to hold.

Does it mean that everything is lost under low regularity-weak causality? Well, the good news is that there is a very natural Lorentzian distance in this regime, we denote it $D$ in this section, but it is not defined via the standard procedure. Curiously, it is upper semi-continuous, not lower semi-continuous and, unlike $d$, it does not require additional properties on the spacetime to get a semi-continuity property  (such as global hyperbolicity or local Lipschitzness and properness). Furthermore, it has wider applicability because it satisfies the reverse triangle inequality for a larger relation than the causal relation $J$, namely $J_S$. In fact, I would argue that $d$ has been given so much importance out of an historical accident. If $D$ had been discovered before $d$, probably less work would have been devoted to study the properties of $d$.
I defined the value $D(p,q)$ by following the idea that leads to the definition of the Seifert relation. We open slightly the cones and consider the causal curves, for the enlarged cones, connecting the two points. But now, as we are interested on both the causal and  metric aspects, we not only enlarge the cones, we also enlarge the {\em indicatrices} inside them, meaning that the convex side of the enlarged indicatrix contains the previous indicatrix (this is really an enlargement of the cone in the product spacetime according to the product trick). This allows us to compute the proper time for each enlargement and then to take the infimum which, by definition, is $D(p,q)$. We call it the {\em stable Lorentzian distance.} As mentioned, the so obtained function is upper semi-continuous in the product manifold topology, satisfies the reverse triangle inequality for triples $(p,q), (q,r) \in J_S$, satisfies $D(p,q)>0 \Rightarrow (p,q)\in J_S$, and for stably causal spacetimes, it satisfies $D(p,p)=0$. Moreover, $d\le D$ and under global hyperbolicity $d=D$.

An important concept associate to $D$ is that of {\em stable spacetime}. This is a property which stays between global hyperbolicity and stable causality \cite[Thm.\ 2.63]{minguzzi17} and obtained by imposing that $D$ is finite. A theorem \cite[Thm.\ 2.61]{minguzzi17} then tells us that stable spacetimes are stable, in the sense that the enlargement of cones and indicatrices previously mentioned can be chosen so that both causality and the finiteness of $D$ are preserved.

The importance of stable spacetimes can be readily understood mentioning two results.


The first concerns the problem of establishing whether the spacetime can be embedded as a Lorentzian submanifold of the canonical example of flat spacetime, namely Minkowski's.
In the non-Finslerian/iso\-tro\-pic case, under sufficient differentiability of the metric ($C^3$), the  stable spacetimes are characterized as being the Lorentzian submanifolds (possibly with boundary) of Minkowski spacetime $\mathbb{L}^n$ for  some dimension $n$  \cite[Thm.\ 4.13]{minguzzi17}.

The second problem  is that of representing the spacetime geometry through suitable sets of functions living in it.
It is solved identifying first the geometrical ingredients.
If one focuses on just the topological ordered space $(M,\mathscr{T},J_S)$, namely on a closed cone structure under stable causality, then in order to recover $J_S$ we need to prove the formula $(p,q)\in J_S$ iff for every $f$,  $f(p)\le f(q)$, where $f$ runs over the continuous isotone functions. This formula is a consequence of order normality which holds for closed orders over manifolds as it follows from Auslander-Levin's theorem and other results \cite{auslander64,levin83,minguzzi11c}.
In order to represent the topological ordered space one also wants to recover the topology which should be the initial topology of the family of continuous isotone functions. The property that needs to be proved is complete order regularity (or quasi-uniformizability). This property, which we already discussed, is harder to prove but still holds for the Seifert (or Sorkin and Woolgar) order induced on a stably causal closed cone structure \cite{minguzzi12d}. What we really want to recover is, the topology, the order and the Lorentzian metric from a suitable family of functions. The formula to be proved was conjectured by Parfionov and Zapatrin \cite{parfionov00} in the context of Connes' non-commutative approach to the unification of fundamental forces. They conjectured the {\em Lorentzian distance formula}
\[
d(p,q)=\inf\{ [f(q)-f(p)]^+ : f \in \mathscr{S} \}
\]
 where $\mathscr{S}$ is the family of $F$-steep temporal functions and where $a^+:=\textrm{max}\{0,a\}$. We recall that a function $f$ is {\em temporal} if it is $C^1$ and $\dd f$ is positive on the future causal cone $C_p$, for every $p$, and  {\em $F$-steep} if $\dd f(v)\ge F(v)$   for every $v\in C$.

Observe that it makes sense to consider differentiable functions because we are considering a smooth manifold. In a more abstract setting the rushing functions \cite{minguzzi18b} would need to be considered instead, we shall return on this point later on.

Now, we proved the formula for globally hyperbolic spacetimes \cite[Thm.\ 4.67]{minguzzi17}, but it is really more natural under a not so strong causality condition. The fact that in it we have the infimum implies that the Lorentzian distance will be upper semi-continuous, not necessarily lower semi-continuous. Fortunately, we had already noticed that the most natural Lorentzian distance is $D$ and hence upper semi-continuous.
Indeed, we were able to show that it is possible to recover $D$ in general, and the reason one recovers $d$ under global hyperbolicity is that under that assumption $D=d$ \cite[Thm.\ 2.60(g)]{minguzzi17} (see also \cite[Thm.\ 4.8]{minguzzi17}). We proved \cite[Thm.\ 4.6]{minguzzi17}

\begin{theorem} \label{aas}
Let $(M,{F})$ be a closed Lorentz-Finsler space and let $\mathscr{S}$ be the family of smooth   ${F}$-steep temporal functions. The Lorentz-Finsler space  $(M,{F})$ is stable if and only if $\mathscr{S}$  is non-empty. In this case $\mathscr{S}$   represents
\begin{itemize}
\item[(i)] the order $J_S$, namely $(p, q)\in J_S \Leftrightarrow f(p)\le f(q), \ \forall f \in \mathscr{S}$;
\item[(ii)] the manifold topology, as it is the initial topology of the functions in $\mathscr{S}$;
\item[(iii)] the stable distance, in the sense that  the distance formula holds true: for every $p,q\in M$
\begin{equation}
 D(p,q)=\mathrm{inf} \big\{[f(q)-f(p)]^+\colon \ f \in \mathscr{S}\big\}.
\end{equation}
\end{itemize}
\end{theorem}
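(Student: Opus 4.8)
The plan is to treat the three representation statements as a package that flows from a single analytical construction, namely a supply of $F$-steep temporal functions, and then to deduce each of (i)--(iii) by squeezing known inequalities. First I would settle the equivalence ``$(M,F)$ stable $\Leftrightarrow \mathscr{S}\neq\emptyset$.'' For the forward direction I would invoke the product trick: stability means the cones and indicatrices in $M^\times=M\times\mathbb{R}$ can be simultaneously enlarged while preserving causality and the finiteness of $D$; on the enlarged, stably causal closed cone structure on $M^\times$ one has a time function $\tau$, and the steepness requirement $\dd f(v)\ge F(v)$ on $C$ translates, via (\ref{curx}), into the statement that $\{(p,z): z< f(p)\}$ is a past set for the enlarged cone structure, i.e.\ that $\mathrm{graph}(f)$ is achronal there. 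So the task reduces to producing a smooth temporal function on $M^\times$ for the enlarged structure, which is exactly what stable causality plus the smoothing results for closed cone structures (steep time function theorem on $M^\times$) provide; one then reads off $f$ as a Finsler-steep temporal function on $M$. For the converse, a single $F$-steep temporal $f$ already forces $D$ to be finite because of the inequality $D(p,q)\le [f(q)-f(p)]^+$ coming from steepness applied along connecting curves for slightly enlarged cones, so $\mathscr{S}\neq\emptyset\Rightarrow D<\infty\Rightarrow$ stability.

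Next I would establish the two ``$\le$'' inequalities in (iii), which simultaneously give the easy halves of (i). If $(p,q)\in J_S$ and $f\in\mathscr{S}$, then integrating $\dd f\ge 0$ along $g'$-timelike curves for every $g'>g$ and passing to the limit gives $f(p)\le f(q)$; integrating the sharper bound $\dd f(v)\ge F(v)$ gives $f(q)-f(p)\ge \ell_{g'}(\gamma)$ along each such connecting curve, and taking sup over curves and inf over enlargements yields $f(q)-f(p)\ge D(p,q)$, hence $[f(q)-f(p)]^+\ge D(p,q)$; taking the infimum over $f$ gives ``$D(p,q)\le \inf\{\dots\}$''. It remains to handle the case $(p,q)\notin J_S$: here I want $\inf\{[f(q)-f(p)]^+\}=0=D(p,q)$ (the latter from $D(p,q)>0\Rightarrow (p,q)\in J_S$), so I must exhibit, for each $\varepsilon>0$, an $f\in\mathscr{S}$ with $f(q)\le f(p)+\varepsilon$. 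That is precisely an order-normality/separation statement on the closed cone structure: since $(p,q)\notin J_S$, the sets $J_S^+(q)$ and $J_S^-(p)$ are disjoint closed future/past sets (or more carefully, $q\notin J_S^-(p)$), and order normality for closed orders over manifolds (Auslander--Levin) yields a continuous isotone separating function; the real content is to upgrade this to a \emph{smooth $F$-steep temporal} separating function, for which one again runs the argument on $M^\times$ with enlarged cones and uses the smoothing theorem to convert the continuous isotone function into a steep temporal one that still does the separating to within $\varepsilon$.

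With (iii) in hand, (i) follows: the direction $(p,q)\in J_S\Rightarrow f(p)\le f(q)\ \forall f$ was shown above, and conversely if $(p,q)\notin J_S$ the separating function just constructed has $f(p)>f(q)$ (take $\varepsilon$ below the gap the separation provides), so the implication reverses. For (ii) I would argue that the initial topology $\mathscr{T}_{\mathscr{S}}$ of $\mathscr{S}$ is coarser than the manifold topology because every $f\in\mathscr{S}$ is continuous, and finer (hence equal, $M$ being a manifold and the functions enough to separate points and closed sets by the separation results just used) because complete order regularity / quasi-uniformizability of $(M,\mathscr{T},J_S)$ on a stably causal closed cone structure says exactly that the manifold topology is the initial topology of the continuous isotone functions, and one checks that $F$-steep temporal functions are cofinal among these for the purpose of generating the topology (given a continuous isotone $h$ and a point, add a small multiple of a fixed global steep temporal function to make it temporal and steep without destroying the local behaviour near that point).

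The main obstacle is the smoothing step that pervades all three parts: turning the merely continuous isotone separators supplied by Auslander--Levin-type theorems into \emph{smooth}, \emph{strictly} $\dd f$-positive-on-$C$, and \emph{$F$-steep} functions, uniformly enough that the separation and the distance inequality survive the regularization. This is where the product-trick reformulation earns its keep --- steepness in $F$ becomes ordinary timelikeness for an enlarged cone structure on $M^\times$, so the steep time function (smoothing) theorem for closed cone structures can be applied there --- but making the enlargement compatible simultaneously with (a) preserving stable causality, (b) keeping $D$ finite, and (c) respecting the prescribed $\varepsilon$-separation at $(p,q)$ is the delicate bookkeeping that the proof must carry out carefully; everything else is integration of $\dd f\ge F$ along causal curves and standard initial-topology manipulations.
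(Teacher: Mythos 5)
This proceedings paper states Theorem \ref{aas} only as a recollection of \cite[Thm.\ 4.6]{minguzzi17} and, as announced, omits the proof entirely, so there is no in-paper argument to compare yours against; I can only measure your proposal against the machinery the paper does describe (the product trick on $M^\times$, the definition of $D$ via simultaneous enlargement of cones and indicatrices, stability in the sense that such an enlargement can be chosen preserving causality and finiteness of $D$, and Auslander--Levin order normality / complete order regularity for closed orders on manifolds). Within that toolkit your overall architecture is sound: reading $F$-steepness as causality/achronality for the enlarged cone structure on $M^\times$, deriving $f(p)\le f(q)$ and $f(q)-f(p)\ge D(p,q)$ by integrating $\dd f\ge F$ along connecting curves for slightly opened cones, and obtaining finiteness of $D$ (hence stability) from a single steep temporal function are all the right moves.

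There is, however, a genuine gap in your treatment of (iii). Having proved $D(p,q)\le\inf\{[f(q)-f(p)]^+\}$, you assert that ``it remains to handle the case $(p,q)\notin J_S$.'' That is not what remains: the substantial half of the Parfionov--Zapatrin formula is the reverse inequality for pairs $(p,q)\in J_S$ with $D(p,q)<\infty$, i.e.\ that for every $\varepsilon>0$ there exists $f\in\mathscr{S}$ with $f(q)-f(p)\le D(p,q)+\varepsilon$. An order-separation statement cannot deliver this, because every $f\in\mathscr{S}$ already satisfies $f(q)-f(p)\ge D(p,q)$ on $J_S$; what is needed is a supply of \emph{near-optimizers}, a quantitative construction rather than a qualitative separation of unrelated points. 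The natural repair, consistent with the product trick you invoke elsewhere, is to note that $(p,0)\nleq^{\downarrow}(q,D(p,q)+\varepsilon)$ in $(M^\times,\le^{\downarrow})$ for a suitable stable enlargement, separate these two points of $M^\times$ by a steep temporal function there, and convert it into an $F$-steep temporal function on $M$ via its graph, using the equivariance condition (*); without this step the distance formula is established only as an inequality plus its degenerate case. A secondary weak point: your ``cofinality'' argument for (ii) --- adding a small multiple of a fixed steep temporal function to a continuous isotone $h$ --- produces neither a smooth nor a steep function, since $h$ need not be differentiable (and steepness would require the coefficient to be at least $1$, not small); the topology statement should instead be extracted from the same quantitative separation on $M^\times$ that fixes (iii).
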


\begin{remark}[Lorentz-Wasserstein distance under weak causality conditions] $\empty$\\
In the study of the causal optimal transport problem \cite{miller17,eckstein17,suhr16} for any two probability measures $\mu,\nu$ on $M$ one says that they are causal if there is a coupling $\omega$ supported on $J$. Then one considers the $s^\textrm{th}$ Lorentz-Wasserstein distance
\[
LW_s(\mu,\nu)=\sup_{\omega \in \Pi_c(\mu,\nu)}\left[\int d(p,q)^s \dd \omega(p,q)\right]^{1/s} ,
\]
where $\Pi_c(\mu,\nu)$ is the set of causal couplings. It represents the optimal total gain of moving the mass from $\mu$ to $\nu$ while respecting causality (the larger the average proper time the larger the gain/benefit; note that it is better not to speak of cost as the optimization would suggest a minimization). One can set the total gain to $-\infty$ if there is no causal coupling or to zero depending on the convention. Standard result on the existence of the optimal transport plan require, if one has to minimize the total cost, the lower semi-continuity of the cost function \cite[Thm.\ 2.10]{ambrosio21}. Since here we have to maximize, what we need is the upper semi-continuity of function $d$. Naturally, as $d$ is lower semi-continuous, this leads to study the problem of optimization of the causal transport of  measures in spacetimes in which $d$ is continuous, most notably globally hyperbolic ones.

However, as already noted by Miller \cite{miller17}, the transport of measures can be studied by replacing $J$ with the relation $K (=J_S)$ on stably causal spacetimes. This observation can be further refined by recognizing that the function to be used in the definition of the Lorentzian Wasserstein distance is the stable Lorentzian distance $D$, rather than $d$. By doing so, the upper semi-continuity property of $D$ aligns well with the requirements for the existence of an optimal transport plan, which can thus be guaranteed under significantly weaker causality conditions. This insight also extends to less regular frameworks, such as those where $M$ is not treated as a manifold. In any case, the function $d$ appearing in the displayed expression would need to be upper semi-continuous, like $D$, and the relation over which the causal couplings are supported would have to be closed, as is the case with $K$. In the transition to the smooth manifold case they would indeed correspond to $D$ and $K$, not to $d$ and $J$.

Of course, one might reach the conclusion that, from a technical standpoint, the theory of optimal transport works more effectively with an upper semi-continuous Lorentzian distance. However, without a clear physical motivation—such as the one we provided through the construction of the function $D$—it would be unlikely to consider these types of Lorentzian distances as a natural choice.

%
%
\end{remark}

\section{A Definition of Spacetime}


We now give a definition of spacetime that abstracts the properties of the stable distance in closed Lorentz-Finsler spaces \cite{minguzzi17} (see Theorem 2.6 of \cite{minguzzi17}, properties (a), (c), (d); under stable causality (h) is added besides the antisymmetry of the relation;   for stability see  \cite[Def.\ 2.29]{minguzzi17}). The correspondence with the notation of that paper is $X \leftrightarrow M$, $d\leftrightarrow D$, $\le \leftrightarrow J_S$.

\begin{remark}
For space reasons, in this exposition we shall omit most proofs. They will soon appear in a companion, more technical, work.
\end{remark}

For us the spacetime will be the following object
\begin{definition}
A {\em spacetime} is a quadruple $(X,\mathscr{T},\le, d)$, where  $(X,\mathscr{T},\le)$ is a  closed preordered space, and $d:X\times X\to [0,\infty]$ is an $\mathscr{T}\times \mathscr{T}$-upper semi-continuous function such that, $x\nleq y$ implies $d(x,y)=0$, and for every triple $x\le y\le z$, we have (reverse triangle inequality)
\[
d(x,y)+d(y,z)\le d(x,z).
\]
We say that it is {\em weakly stably causal}\footnote{One can be tempted to call this property just {\em causality} as there are no other causal relations, still I prefer to keep, at least in this work, the terminology that best clarifies the transition to the manifold case.} if $\le$ is an order and $d(x,x)=0$ for every $x\in X$. If, additionally, $d$ is finite, we say that it is {\em weakly stable}. We also call $(X,\mathscr{T},\le)$ the {\em causal structure} of the spacetime. It is {\em weakly stably causal} if $\le$ is an order. We drop {\em weakly} in the previous instances iff $(X,\mathscr{T},\le)$ is locally convex.
\end{definition}
Note that for every $x\in X$, $d(x,x)=0$ or $d(x,x)=+\infty$. The definition of (weak) stable causality for the causal structure is obtained from that of the spacetime dropping the condition on $d$, as it does not enter the causal structure.
A function $d$ that satisfies the above properties is called a {\em Lorentzian distance}.
For simplicity, we might denote the graph $G(\le)\subset X\times X$ with $\le$ itself.

The following reformulation will not be used but is worth mentioning.
\begin{definition}
A {\em spacetime}  is a triple $(X,\mathscr{T},\tau)$, where   $\tau:X\times X\to \{-\infty\}\cup [0,\infty]$ is an $\mathscr{T}\times \mathscr{T}$-upper semi-continuous function such that for every $x\in X$, $\tau(x,x)\ge 0$, and  for every $x,y,z\in X$
\[
\tau(x,y)+\tau(y,z)\le \tau(x,z),
\]
with the convention $-\infty+\infty=-\infty$.
We say that it is {\em weakly stably causal} if $\textrm{min}\{\tau(x,y),\tau(y,x)\} \ge 0 \Rightarrow x=y$ and for every $x\in X$, $\tau(x,x)=0$. If, additionally, $\tau<+\infty$ , we say that it is {\em weakly stable}.
\end{definition}

The two notions of spacetime are equivalent. To pass from the former to the latter set
 \begin{equation}
 \tau(x,y):=d^\downarrow(x,y):=
 \begin{cases}
d(x,y),& \text{if } x\le y, \\
-\infty,& \text{if } x\nleq y.
\end{cases}
 \end{equation}
To pass from the latter to the former set $\le=\{\tau\ge 0\}$, $d(x,y)=\textrm{max}\{0,\tau(x,y)\}$.

The function $\tau$ might be called {\em time separation} to distinguish it from $d$. As the notation $d^\downarrow$ suggests, it is really the metric analog of $F^\downarrow$, see Eq.\ (\ref{curx}), where $d$ is the metric analog of $F$. Function $\tau$ is convenient in the study of optimal transport on spacetime \cite{mccann18}.

\begin{remark}
A weakly stable spacetime admits the following simple characterization.
A {\em weakly stable spacetime}  is a triple $(X,\mathscr{T},\tau)$, where   $\tau:X\times X\to \{-\infty\}\cup [0,\infty)$ is a $\mathscr{T}\times \mathscr{T}$-upper semi-continuous function such that
\begin{align*}
\forall x,y\in X& &0\le \tau(x,y)+\tau(y,x)&\Leftrightarrow x=y, \\
\forall x,y,z\in X& &\tau(x,y)+\tau(y,z)&\le \tau(x,z).
\end{align*}
\end{remark}

The reason to include local convexity in  the definition of stable causality/stability will not be entirely appreciated in this work. It enters the proofs of results that will be presented in the following sections. It can be observed that in the smooth setting, under stable causality, the relation $K$ (which coincides with $J_S$) satisfies it \cite[Lemma 16]{sorkin96} \cite[Lemma 5.5]{minguzzi07}\cite[Thm.\ 4.15]{minguzzi12d}. Indeed, they are examples of compactly generated preorders \cite[Cor.\ 4.12]{minguzzi12d}.

\subsection{The Product Trick}
Let us arrive at the product trick which was introduced in \cite{minguzzi17,minguzzi17d} at the level of the tangent bundle. It shows that the metrical aspects of spacetime can be reduced to causality in a space with one additional dimension.

Given a set $X$ let us denote $X^\times:=X\times \mathbb{R}$. We are interested in a topological ordered space $(X^\times,\mathscr{T}^\times,$ $\le^\downarrow)$ structured so as to respect the product structure of $X^\times$.
So let $\mathscr{T}^\times$ be a topology on $X^\times$ which is the product of a topology $\mathscr{T}$ on $X$ by the topology $\mathscr{T}_\mathbb{R}$ of $\mathbb{R}$, and let $\le^\downarrow$ be a preorder on $X^\times$ that
satisfies
\[
(*) \qquad \qquad \textrm{ If } \ (p,r)\le^\downarrow (p',r') \ \textrm{ and } \ [s'-s]^+\le [r'-r]^+ \ \textrm{ then } \ (p,s)\le^\downarrow (p',s').
\]
This condition can be better understood as follows. Firstly, observe that a preorder that is translationally invariant,  namely for all $c\in \mathbb{R}$,  $(x,a)\le^\downarrow (y,b) \Rightarrow (x,a+c)\le^\downarrow (y,b+c)$, and can be projected to a preorder on $X$ defined by ``$x\le y$ iff there are $a,b\in \mathbb{R}$ such that $(x,a)\le (y,b)$'' (the translational invariance is used to show that this relation is transitive). Secondly, given a translationally invariant preorder $\le^\downarrow$ on $X^\times$, the preorder $\le^\downarrow$ contains the product of the projected preorder with the reverse canonical order\footnote{Here we are endowing the real line with the standard topology but with the reverse  canonical order. The down arrow recalls this fact. We could have used the standard order but we wanted to keep notations analogous to those of \cite{minguzzi17} which have some advantage when treating time functions as graphs, as some minus signs are not needed.} on $\mathbb{R}$: if $b\le a$ and $x\le y$ $\Rightarrow  (x,a)\le^\downarrow (y,b)$. The condition (*) is equivalent to these two conditions: translational invariance, which ensures projectability, and inclusion of the product preorder.

Note that by (*),  $(p,r)\le^\downarrow (p',r') \Rightarrow (p,0)\le^\downarrow (p',0)$ thus the projected order is ``$x\le y$ \textrm{ iff }  $(x,0)\le (y,0)$''.


\begin{definition} \label{pro}
A spacetime is a closed preordered space $(X^\times, \mathscr{T}^\times,\le^\downarrow)$ such that $\le^\downarrow$ satisfies (*). We say that it is {\em weakly stably causal} if $\le^\downarrow$ is  an order. If, additionally, the future (equiv.\ past) of every point does not contain an entire $\mathbb{R}$-fiber, we say that it is {\em weakly stable}. The adjective {\em weak} is dropped if  $(X^\times, \mathscr{T}^\times,\le^\downarrow)$ is locally convex.
\end{definition}

This notion is equivalent to the previous ones. Starting from $(X,\mathscr{T},\le, d)$ just let $X^\times =X\times \mathbb{R}$, $\mathscr{T}^\times =\mathscr{T}\times \mathscr{T}_\mathbb{R}$, and
\[
(x,a)\le^\downarrow (y,b) \ \ \textrm{ iff } \ \ x\le y  \ \textrm{and} \  b\le a+d(x,y) ,
\]
equivalently, starting from $(X,\mathscr{T},\tau)$ set
\[
(x,a)\le^\downarrow (y,b) \ \ \textrm{ iff } \ \ b\le a+\tau(x,y).
\]

On the other direction, starting from $(X^\times, \mathscr{T}^\times,\le^\downarrow)$ define $\le$ as the projected preorder
\[
x\le y   \ \textrm{ iff } \   (x,0)\le^\downarrow (y,0)
\]
and set $d(x,y):=0$ if $x\nleq y$ and   otherwise
\[
d(x,y):=  \sup \{b: (x,0)\le^\downarrow (y,b) \}
\]
 (note that  $b=0$  belongs to the set thus $d\ge 0$). Equivalently, these two conditions read
\[
\tau(x,y):= \sup \{b: (x,0)\le^\downarrow (y,b) \},
\]
where it is understood that $\tau=-\infty$ for the empty set.

\subsection{Spacetimes from Functional Spaces}

The next result proves that stable spacetimes are quite natural objects as they  follow from just a family of real functions over a set. It is inspired by \cite[Thm.\ 4.6]{minguzzi17} which was restricted to manifolds. Note that here the $F$-steep functions are replaced by the rushing functions \cite[Def.\ 1.26]{minguzzi18b}. On a spacetime $(X,\mathscr{T},\le, d)$ a function $X\to \mathbb{R}$ is {\em rushing} if
\[
\forall x,y \in X, \qquad x\le y  \ \Rightarrow \ f(x)+d(x,y) \le f(y)
\]
 equivalently, for every $x,y\in X$, $f(x)+\tau(x,y) \le f(y)$. Intuitively, a rushing function  interpreted as a time runs faster than proper time.
For the relationship between steep and rushing functions on a manifold, see \cite[Thm.\ 1.28]{minguzzi18b}.

\begin{theorem} \label{ckkr}
Let $\mathcal{F}$ be a family of real valued functions on $X$ that distinguishes/separates points: $f(x)=f(y)$ for every $f\in \mathcal{F}$ implies $x=y$. Then
\begin{itemize}
\item[(i)] The initial topology $\mathscr{T}$ generated by $\mathcal{F}$ is Tychonoff (hence Hausdorff). If $\mathcal{F}$ is countable then the initial topology is second countable, hence metrizable.
\item[(ii)] The  relation defined by
\begin{equation} \label{ord}
\le:=\{(x,y): f(x)\le f(y), \ \forall f\in \mathcal{F}\}
\end{equation}
 is an order which is closed with respect to the product of the initial topology of point (i).
\item[(iii)]    The function $d: X^2\to [0,\infty)$ defined by
\begin{equation} \label{dis}
d(x,y):= \max\big\{0, \inf_{\mathcal{F}} [f(y)-f(x)]\big\}
\end{equation}
is upper semi-continuous (with respect to the product of the initial topology of point (i)); satisfies $x\nleq y \Rightarrow d(x,y)=0$; for every $x\in X$, $d(x,x)=0$; and it  satisfies the reverse triangle inequality with respect to the relation $\le$ of point (ii).
\item[(iv)]
 Suppose that for every $\lambda \ge 1$, $\lambda \mathcal{F}\subset \mathcal{F}$, then we have the direct expressions
\begin{equation} \label{dis2}
\tau(x,y)= \inf_{\mathcal{F}} [f(y)-f(x)],
\end{equation}
and
\begin{equation}
(x,a)\le^\downarrow (y,b) \ \ \textrm{ iff } \ \ \forall f\in \mathcal{F}, \ \ f(x)-a\le f(y)-b.
\end{equation}
\end{itemize}
As a consequence, $(X,\mathscr{T},\le, d)$ is a stable spacetime and every element $f\in \mathcal{F}$ is $\mathscr{T}$-continuous, $\le$-isotone and $d$-rushing.
\end{theorem}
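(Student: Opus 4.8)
The plan is to verify the four enumerated claims essentially independently, since each is a standard fact about initial topologies and pointwise-defined relations, and then assemble them into the final assertion that $(X,\mathscr{T},\le,d)$ is a stable spacetime with $\mathcal{F}$ a family of continuous isotone rushing functions.

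For (i), I would note that the initial topology of a family of continuous real maps is the coarsest making all $f\in\mathcal F$ continuous, hence its a subbasis of sets $f^{-1}(I)$ for $I$ open in $\mathbb R$; since $\mathbb R$ is Tychonoff and $\mathcal F$ separates points, the evaluation map $X\to\mathbb R^{\mathcal F}$ is an embedding into a Tychonoff (indeed completely regular Hausdorff) space, so $X$ inherits those properties. If $\mathcal F$ is countable, $\mathbb R^{\mathcal F}$ is second countable and metrizable (e.g.\ by the standard product metric), so its subspace $X$ is as well. This is routine and I expect no obstacle.

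For (ii), reflexivity and transitivity of $\le$ as defined by (\ref{ord}) are immediate; antisymmetry follows precisely because $\mathcal F$ separates points (if $x\le y$ and $y\le x$ then $f(x)=f(y)$ for all $f$, so $x=y$). Closedness of the graph in the product of the initial topology: the graph is $\bigcap_{f\in\mathcal F}\{(x,y): f(x)\le f(y)\}$, and each such set is closed because $(x,y)\mapsto f(y)-f(x)$ is continuous for the initial topology (the projections $X\times X\to X$ are continuous and $f$ is continuous) and $[0,\infty)$ is closed; an intersection of closed sets is closed. For (iii), upper semicontinuity of $d$ in (\ref{dis}): each $(x,y)\mapsto f(y)-f(x)$ is continuous, an infimum of continuous (indeed u.s.c.) functions is u.s.c., and $\max\{0,\cdot\}$ preserves u.s.c.; the property $x\nleq y\Rightarrow d(x,y)=0$ is immediate since then some $f$ has $f(y)-f(x)<0$ so the infimum is $\le 0$; $d(x,x)=0$ because $f(x)-f(x)=0$ for all $f$; the reverse triangle inequality for $x\le y\le z$ follows from $[f(z)-f(x)]=[f(y)-f(x)]+[f(z)-f(y)]\ge \tau(x,y)+\tau(y,z)$, taking the infimum over $f$ and then comparing with $\max\{0,\cdot\}$ (using that both $\tau(x,y),\tau(y,z)\ge 0$ since $x\le y$, $y\le z$).

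For (iv), under the hypothesis $\lambda\mathcal F\subset\mathcal F$ for all $\lambda\ge 1$, I would argue that when $x\le y$ the infimum $\inf_{\mathcal F}[f(y)-f(x)]$ is automatically nonnegative: replacing $f$ by $\lambda f$ multiplies $[f(y)-f(x)]$ by $\lambda$, and if this quantity were negative for some $f$ then letting $\lambda\to\infty$ would force the infimum to $-\infty$, yet each individual term is finite and $\ge 0$ is what we must show — more carefully, the scaling closure shows the infimum over the orbit $\{\lambda f:\lambda\ge 1\}$ equals $0$ if $f(y)-f(x)=0$ and has the sign of $f(y)-f(x)$ otherwise, so nonnegativity of the full infimum is equivalent to $f(y)-f(x)\ge 0$ for all $f$, i.e.\ to $x\le y$; hence on the graph of $\le$ the $\max\{0,\cdot\}$ in (\ref{dis}) is redundant and $\tau=\inf_{\mathcal F}[f(y)-f(x)]$ as in (\ref{dis2}), with $\tau=-\infty$ off the graph. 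The product-trick formula for $\le^\downarrow$ then follows by unwinding the definition $(x,a)\le^\downarrow(y,b)$ iff $b\le a+\tau(x,y)$, i.e.\ iff $b-a\le\inf_{\mathcal F}[f(y)-f(x)]$, i.e.\ iff $f(x)-a\le f(y)-b$ for all $f\in\mathcal F$. The main (mild) obstacle is getting this scaling argument for (iv) stated cleanly, since one must handle the possibility that some $f(y)-f(x)$ is negative and confirm that scaling closure really does encode isotonicity of the infimum; everything else is bookkeeping.

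Finally, to conclude: (ii) and (iii) give that $(X,\mathscr{T},\le,d)$ satisfies the axioms of a spacetime, and since $\le$ is an order (ii) and $d(x,x)=0$ (iii), it is weakly stably causal; $d$ is finite because each $[f(y)-f(x)]$ is a real number so the infimum, if not $-\infty$, is real, and in any case $d\ge 0$ is finite-valued by its definition with values in $[0,\infty)$ — so the spacetime is weakly stable. To upgrade "weakly stable" to "stable" I would invoke local convexity of the causal structure: the initial topology has a subbasis of sets $\{f<c\}$ (open decreasing) and $\{f>c\}$ (open increasing), hence the topology is generated by open increasing and open decreasing sets, which is exactly convexity, a fortiori local convexity; by the Definition this removes the adjective "weakly". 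Lastly each $f\in\mathcal F$ is $\mathscr T$-continuous by construction of the initial topology, $\le$-isotone directly from (\ref{ord}), and $d$-rushing because $x\le y$ gives $d(x,y)=\tau(x,y)\le f(y)-f(x)$ by (\ref{dis2}) (or (\ref{dis})), i.e.\ $f(x)+d(x,y)\le f(y)$.
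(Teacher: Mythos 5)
The paper does not actually contain a proof of Theorem \ref{ckkr}: the remark opening Section 2 states that most proofs are omitted for space reasons and deferred to a companion paper, so there is no in-paper argument to compare your proposal against. Judged on its own, your proof is correct and complete. The two places where something beyond routine bookkeeping is needed are both handled properly. First, for the reverse triangle inequality you correctly observe that on the graph of $\le$ every term $f(y)-f(x)$ is nonnegative, so $\inf_{\mathcal F}[f(y)-f(x)]\ge 0$ there and the outer $\max\{0,\cdot\}$ is inert; without that observation the inequality between the three $\max\{0,\cdot\}$ quantities would not follow from subadditivity of the infimum alone. Second, to drop the adjective ``weakly'' and reach \emph{stable} you supply the required local convexity by noting that the subbasic sets $\{f<c\}$ and $\{f>c\}$ of the initial topology are respectively open decreasing and open increasing, so $(X,\mathscr{T},\le)$ is in fact convex in the paper's sense, a fortiori locally convex; this step is essential and easy to overlook, since the paper's definition of stable spacetime explicitly demands local convexity. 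Your scaling argument for (iv) is also right: closure under $\lambda\ge 1$ forces $\inf_{\mathcal F}[f(y)-f(x)]=-\infty$ exactly when some term is negative, i.e.\ when $x\nleq y$, which reconciles the unclipped infimum with the convention $\tau=-\infty$ off the graph, while on the graph the infimum is already nonnegative so the clipping is redundant. The only caveat, which the theorem statement itself glosses over, is the degenerate case $\mathcal F=\emptyset$ (where the infimum is $+\infty$ and $d$ fails to be finite); this is excluded as soon as $X$ has at least two points, so it does not affect the substance of your argument.
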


Without the condition on the distinction of points, the topology would not be $T_0$ and  $\le$ would not be antisymmetric (it would be a preorder). In principle one could work with indistinguishable points but then the Alexandrov quotient would need to be taken.

The initial topology induced by a family of functions $\mathcal{F}$ is denoted $\mathscr{T}_{\mathcal{F}}$, a preorder given by formula (\ref{ord}) is  denoted $\le_{\mathcal{F}}$, and a Lorentzian distance given by formula (\ref{dis}) is  denoted $d_{\mathcal{F}}$. Thus for any family of functions ${\mathcal{F}}$  that separates points, we have  a stable spacetime $(X,\mathscr{T}_{\mathcal{F}},\le_{\mathcal{F}}, d_{\mathcal{F}})$.

\begin{proposition} \label{mqpg}
On a spacetime $(X,\mathscr{T},\le, d)$ the $\mathscr{T}$-continuous, $\le$-isotone functions form a convex cone $\mathcal{I}$. The
$\mathscr{T}$-continuous, $\le$-isotone and $d$-rushing functions form a convex set $\mathcal{F}\subset \mathcal{I}$ such that $\lambda \mathcal{F}\subset \mathcal{F}$ for every $\lambda \ge 1$. Additionally, $\mathcal{I} + \mathcal{F} \subset \mathcal{F}$. For every $f,g\in \mathcal{F}$, $\textrm{min}(f,g), \textrm{max}(f,g)\in \mathcal{F}$, and a similar statement holds for $\mathcal{F}$ replaced by $\mathcal{I}$.
\end{proposition}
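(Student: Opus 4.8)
The plan is to verify each closure property by a direct check: the proposition is a bundle of elementary verifications, and the only organizing remark needed is that the defining conditions are all of an ``additive/monotone'' shape, stable under the operations in question. First I would record two preliminary observations. Since $d\ge 0$, every $d$-rushing function is automatically $\le$-isotone, so the isotonicity clause in the definition of $\mathcal{F}$ is redundant and $\mathcal{F}\subset\mathcal{I}$ is immediate; and if some pair $x\le y$ had $d(x,y)=+\infty$, then no real-valued function could be $d$-rushing, whence $\mathcal{F}=\emptyset$ and every claim about $\mathcal{F}$ is vacuous. So when reasoning about members of $\mathcal{F}$ one may assume $d$ is finite on the graph of $\le$.

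Next I would treat $\mathcal{I}$: continuity is preserved by sums and by nonnegative scalar multiples, and if $f,g$ are $\le$-isotone and $\alpha,\beta\ge 0$ then $x\le y$ forces $\alpha f(x)+\beta g(x)\le\alpha f(y)+\beta g(y)$; together with the fact that constants lie in $\mathcal{I}$, this shows $\mathcal{I}$ is a nonempty convex cone. For $\mathcal{F}$: convexity follows because, for $f,g\in\mathcal{F}$ and $t\in[0,1]$, adding $t$ times the rushing inequality for $f$ to $1-t$ times that for $g$ gives, for $x\le y$, the inequality $(tf+(1-t)g)(x)+d(x,y)\le(tf+(1-t)g)(y)$, while continuity and isotonicity are inherited as above. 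For $\mathcal{I}+\mathcal{F}\subset\mathcal{F}$: given $h\in\mathcal{I}$, $f\in\mathcal{F}$ and $x\le y$, add $h(x)\le h(y)$ to $f(x)+d(x,y)\le f(y)$ to obtain $(h+f)(x)+d(x,y)\le(h+f)(y)$; the remaining properties of $h+f$ are clear. For the lattice operations, $\min(f,g)$ and $\max(f,g)$ are continuous (since $\min,\max\colon\mathbb{R}^2\to\mathbb{R}$ are continuous and $x\mapsto(f(x),g(x))$ is continuous) and isotone by monotonicity of $\min$ and $\max$ in each argument; for the rushing property one writes $\max(f,g)(x)+d(x,y)=\max\bigl(f(x)+d(x,y),\,g(x)+d(x,y)\bigr)\le\max\bigl(f(y),\,g(y)\bigr)=\max(f,g)(y)$, and uses $\min(f,g)(x)+d(x,y)\le f(x)+d(x,y)\le f(y)$ together with the symmetric bound by $g(y)$. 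The same computations, with the $d$-terms deleted, give the corresponding statements for $\mathcal{I}$.

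The only step that is not essentially a one-liner, hence the closest thing to an obstacle, is $\lambda\mathcal{F}\subset\mathcal{F}$ for $\lambda\ge 1$. From $f\in\mathcal{F}$ and $x\le y$ one must deduce $\lambda f(x)+d(x,y)\le\lambda f(y)$, i.e.\ $d(x,y)\le\lambda\bigl(f(y)-f(x)\bigr)$; this follows by combining the rushing inequality $d(x,y)\le f(y)-f(x)$ with $f(y)-f(x)\ge d(x,y)\ge 0$ and $\lambda\ge 1$, so that $\lambda\bigl(f(y)-f(x)\bigr)\ge f(y)-f(x)\ge d(x,y)$. This is precisely where the hypothesis $\lambda\ge 1$ (rather than $\lambda>0$) is needed: scaling a rushing function down leaves $d(x,y)$ unchanged while shrinking the available gap $\lambda(f(y)-f(x))$, which can destroy the rushing property. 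Assembling these verifications yields the proposition.
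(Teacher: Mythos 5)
Your proof is correct: each closure property of $\mathcal{I}$ and $\mathcal{F}$ is verified by the expected elementary manipulation of the defining inequalities, and you correctly isolate the one non-trivial point, namely that $\lambda\ge 1$ is needed because $d(x,y)\le f(y)-f(x)$ combined with $f(y)-f(x)\ge 0$ gives $d(x,y)\le \lambda\bigl(f(y)-f(x)\bigr)$ only for $\lambda\ge 1$. The paper explicitly omits the proof of this proposition (deferring it to a companion work), so there is no printed argument to compare against, but your verification is complete and is evidently the intended one; your preliminary remarks (rushing $\Rightarrow$ isotone since $d\ge 0$, and $\mathcal{F}=\emptyset$ if $d$ is infinite somewhere on the graph of $\le$) are correct and handle the extended-value subtlety cleanly.
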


The set $\mathcal{I}$ is the analog of the causal cone on the tangent space of a Lorentzian manifold. The set $\mathcal{F}$ is the analog of the unit Lorentzian ball on the same tangent space. The nice fact is that the geometry of the tangent space is reproduced at the functional level, where the functional space aims to reproduce the whole spacetime.

\subsection{Representation Theorems}
%
%
%


We have established that a separating family of functions over a sets generates a stable spacetime. The natural question is whether there exists a kind of converse. We can indeed obtain that  every  spacetime, under suitable topological assumptions, is obtained in this way. This problem will be addressed in this section which is devoted to representation theorems.

Let us denote $x\sim y$ if $x\le y$ and $y\le x$. Let us also denote $x<y$  if $x\le y$ and $y\nleq x$.
A  isotone function $f: X\to \mathbb{R}$ such that $x<y \Rightarrow f(x)<f(y)$ is a {\em utility}.  If $\le$ is a partial order then a utility is defined by ``$x\le y$ and $x\ne y \Rightarrow f(x)<f(y)$''. If there is a function with the last property then $\le$ is necessarily a partial order. We call a continuous function that satisfies  ``$x\le y$ and $x\ne y \Rightarrow  f(x)<f(y)$'' {\em time function} (we avoid {\em strictly isotone} since some authors denote with this term a utility).

Our terminology is consistent with the result for the smooth case, since when $K$ is antisymmetric (stable causality) its continuous utilities are precisely the time functions \cite[Thm.\ 6]{minguzzi09c}, and if there is a time function $K$ is antisymmetric.

\begin{theorem} \label{cmgp}
A second-countable locally compact   closed preordered space (causal structure) $(X,\mathscr{T},\le)$ is weakly stably causal iff it admits a time function.
\end{theorem}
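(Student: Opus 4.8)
The plan is to prove the equivalence in both directions, with the substantive content lying in the ``only if'' direction. For the easy direction, suppose $(X,\mathscr{T},\le)$ admits a time function $f$. Then $x\le y$ and $x\ne y$ imply $f(x)<f(y)$, so in particular $x\le y$ and $y\le x$ force $f(x)=f(y)$ and hence $x=y$; thus $\le$ is antisymmetric, i.e.\ an order. Since the causal structure carries no function $d$, weak stable causality is exactly the statement that $\le$ is an order, so this direction is immediate from the definition.

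For the ``only if'' direction, assume $\le$ is an order and that $(X,\mathscr{T},\le)$ is second countable and locally compact. The strategy is to first climb the preorder-topological ladder to obtain quasi-uniformizability (equivalently complete order regularity), and then extract a single separating time function. Concretely: second countability plus local compactness gives $\sigma$-compactness, so by \cite[Cor.\ 2.14]{minguzzi12d} (as quoted in the excerpt, ``if the topology is locally compact and $\sigma$-compact, local convexity is promoted to convexity and to preorder normality and hence to quasi-uniformizability'') the space is a normally ordered space and completely regularly ordered, provided we can secure local convexity; here I would invoke the compactly-generated route \cite[Cor.\ 4.14]{minguzzi12d} or note that a locally compact second countable space is compactly generated, so local convexity is automatic. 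Normality of the order then yields, via the continuous-isotone version of Urysohn's lemma (Nachbin / Auslander--Levin, as in \cite{auslander64,levin83,nachbin65}), that $\le$ is represented by the family $\mathcal{I}$ of continuous isotone functions: $x\le y \iff g(x)\le g(y)\ \forall g\in\mathcal{I}$, and complete order regularity gives that $\mathscr{T}$ is the initial topology of $\mathcal{I}$.

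Having $\mathcal{I}$ represent both $\le$ and $\mathscr{T}$, the final step is to compress it into one function. Since $X$ is second countable, the initial topology of $\mathcal{I}$ is already the initial topology of a countable subfamily $\{g_n\}\subset\mathcal{I}$, and by adding countably many more members of $\mathcal{I}$ we may assume this countable family also represents $\le$ (for each pair in a countable basis of the ``$x\nleq y$'' complement, pick a separating isotone function, using normality). Replacing each $g_n$ by $\arctan\!\circ\, g_n$ (still continuous and isotone) to bound it, set $f:=\sum_n 2^{-n}\,\arctan(g_n)$. Then $f$ is continuous, isotone, and if $x\le y$ with $x\ne y$ then $g_n(x)\le g_n(y)$ for all $n$ and $g_m(x)<g_m(y)$ for at least one $m$ (because the $g_n$ separate the order), so $f(x)<f(y)$; hence $f$ is a time function. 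Uniform convergence of the series guarantees continuity.

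The main obstacle is verifying the hypotheses needed to invoke the quasi-uniformizability/normality machinery of \cite{minguzzi12d}: one must check that a second countable locally compact closed ordered space is indeed $\sigma$-compact and either compactly generated or otherwise supplies the missing local convexity, so that the chain ``local convexity $\Rightarrow$ convexity $\Rightarrow$ preorder normality $\Rightarrow$ complete order regularity'' applies. Once that is in place, the passage from the representing family $\mathcal{I}$ to a single time function is routine (it is the order-theoretic analogue of embedding a second countable Tychonoff space into a metrizable one and extracting a single function), and the converse direction is trivial. I would expect the companion paper referenced in the excerpt to carry out the ladder-climbing step in detail; here I would simply cite \cite[Cor.\ 2.14]{minguzzi12d}, \cite[Cor.\ 4.14]{minguzzi12d}, and the Auslander--Levin--Nachbin representation results.
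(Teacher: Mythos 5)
The paper itself gives no proof of Theorem \ref{cmgp} --- it explicitly defers all proofs to a companion paper --- so there is no official argument to compare against. The surrounding text, however, makes the intended route clear: the ``if'' direction is the trivial antisymmetry observation you give, and the ``only if'' direction is essentially the Auslander--Levin utility theorem (a closed preorder on a second-countable locally compact space admits a continuous utility, and for an \emph{order} a continuous utility is exactly a time function; cf.\ the paper's remark that ``when $K$ is antisymmetric its continuous utilities are precisely the time functions''). Your proposal is in effect a reconstruction of the proof of that theorem, and its skeleton --- closedness of the order gives Hausdorffness, second countability plus local compactness gives $\sigma$-compactness, preorder normality gives a representing family of continuous isotone functions, a countable subfamily suffices, and a weighted sum of bounded reparametrizations produces a single time function --- is sound.

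Two points need repair. First, the detour through local convexity, convexity and quasi-uniformizability is both unnecessary and shaky: those properties are needed only to recover the \emph{topology} from $\mathcal{I}$ (as in Theorem \ref{mai2}), not the order, and your claim that local convexity is ``automatic'' because a locally compact second-countable space is compactly generated conflates topological compact generation with the paper's notion, which is a condition on the \emph{preorder} as well. Fortunately you do not need any of it: the paper's quoted result that every locally compact, $\sigma$-compact, closed preordered space is normally preordered already gives you that $\le$ is represented by the continuous isotone functions, which is all the ``only if'' direction requires. Second, the countable reduction step as you state it (``for each pair in a countable basis of the complement of the graph, pick a separating isotone function'') does not quite work, since a single basic box need not admit one isotone function separating all its pairs; the correct argument is to observe that the open sets $\{(x,y): g(x)>g(y)\}$, for $g$ ranging over the continuous isotone functions, cover the open complement of the graph of $\le$, and to extract a countable subcover by the Lindel\"of property of the second-countable product $X\times X$. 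With those two adjustments your construction $f=\sum_n 2^{-n}\arctan(g_n)$ is a valid time function, since antisymmetry guarantees that for $x\le y$, $x\ne y$ some $g_m$ in the countable family satisfies $g_m(x)<g_m(y)$ while all satisfy $g_n(x)\le g_n(y)$.
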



We say that $f:X\to \mathbb{R}$ is a rushing time function if it is both a rushing function and a time function.

The next result is the metric analog of the first statement (non-strict version) in \cite[Thm.\ 4.6]{minguzzi17}.
\begin{theorem}
A second-countable locally compact  spacetime $(X,\mathscr{T},\le,d)$ is weakly stable iff it admits a rushing time function.
\end{theorem}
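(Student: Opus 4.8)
The plan is to handle the two implications separately; the reverse one (a rushing time function forces weak stability) is a short bookkeeping argument, while the forward one is the substantial part and will be reduced to Theorem~\ref{cmgp} together with a single functional construction.

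Suppose first that $f$ is a rushing time function. Since $f$ is continuous with $x<y\Rightarrow f(x)<f(y)$, Theorem~\ref{cmgp} applied to the causal structure $(X,\mathscr{T},\le)$ shows that $\le$ is an order. Reflexivity $x\le x$ and the rushing inequality give $f(x)+d(x,x)\le f(x)$, hence $d(x,x)=0$ because $d\ge 0$, so the spacetime is weakly stably causal; and $d(x,y)\le f(y)-f(x)<\infty$ for $x\le y$ while $d(x,y)=0$ otherwise, so $d$ is finite and the spacetime is weakly stable.

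For the converse, assume weak stability: $\le$ is an order, $d(x,x)=0$ and $d<\infty$. Since $X$ is second countable and locally compact it is $\sigma$-compact, so Theorem~\ref{cmgp} provides a time function $g$, which being continuous and isotone lies in $\mathcal{I}$ (notation of Proposition~\ref{mqpg}). I would then reduce the whole statement to the existence of a single continuous, $\le$-isotone, $d$-rushing function $h$, i.e.\ to $\mathcal{F}\neq\emptyset$: given such an $h$, the inclusion $\mathcal{I}+\mathcal{F}\subset\mathcal{F}$ from Proposition~\ref{mqpg} makes $f:=g+h$ continuous, isotone and rushing, while $g(x)<g(y)$ and $h(x)\le h(y)$ for $x<y$ force $f(x)<f(y)$, so $f$ is a rushing time function.

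Thus everything comes down to producing one $h\in\mathcal{F}$; this is the metric analogue of the non-trivial half of \cite[Thm.\ 4.6]{minguzzi17}, and it is here that finiteness of $d$ and the topological hypotheses enter. The building blocks are the sections of the time separation: for $z\in X$ the function $y\mapsto\tau(z,y)$ is isotone, upper semi-continuous (from upper semi-continuity of $d$) and rushing (from the reverse triangle inequality), and dually $y\mapsto-\tau(y,z)$ is isotone, lower semi-continuous and rushing. These are not directly usable because they take infinite values off the future, resp.\ past, of $z$ and are merely semi-continuous. The plan is to pass, by second countability, to a countable family of such sections and to regularize and splice them, using that $(X,\mathscr{T},\le)$ is a $\sigma$-compact locally compact closed ordered space and hence a normally ordered space in the sense of Nachbin, so that a monotone insertion/Urysohn argument is available; the essential point — and the step I expect to be the real obstacle — is that this insertion must be carried out so as to preserve the \emph{additive} reverse-triangle estimate (a lower bound on the ``$\tau$-increments'' of the function), not merely isotonicity, and it is precisely upper semi-continuity and finiteness of $d$ that make $d$ controllable, locally and globally, for this. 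This yields a countable family $\{h_n\}\subset\mathcal{F}$, which is then assembled as $h:=\sum_n c_n h_n$ with $c_n>0$, $\sum_n c_n=1$, and $\sum_n c_n\sup_{K_n}|h_n|<\infty$ along a compact exhaustion $\{K_n\}$: the series converges uniformly on compacta, so $h$ is continuous and isotone, and from $\sum_n c_n h_n(x)+d(x,y)=\sum_n c_n\bigl(h_n(x)+d(x,y)\bigr)\le\sum_n c_n h_n(y)$ one sees that $h$ is rushing — the countable counterpart of the convexity of $\mathcal{F}$ in Proposition~\ref{mqpg}. This metric refinement of Nachbin's insertion theorem is essentially the representation statement and the distance formula $\tau(x,y)=\inf_{f\in\mathcal{F}}[f(y)-f(x)]$ for these abstract spacetimes, and its full proof is the one deferred to the companion paper.
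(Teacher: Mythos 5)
Your ``only if'' direction and your reduction of the ``if'' direction are both sound. That a rushing time function forces $d(x,x)=0$, finiteness of $d$, and antisymmetry of $\le$ is exactly the right bookkeeping (antisymmetry in fact already follows from the definition of a time function, without invoking Theorem~\ref{cmgp}). For the converse, reducing everything to the nonemptiness of $\mathcal{F}$ via $f=g+h$ with $g$ a time function from Theorem~\ref{cmgp} and $h\in\mathcal{F}$, using $\mathcal{I}+\mathcal{F}\subset\mathcal{F}$ from Proposition~\ref{mqpg}, is a clean and correct decomposition. Note, though, that the paper itself states that most proofs are omitted and deferred to a companion work, so there is no in-paper argument to compare your route against.

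The genuine gap is the one you yourself flag: the existence of a single continuous, isotone, rushing function on a weakly stable, second-countable, locally compact spacetime is not proved but only sketched, and it is precisely the mathematical content of the theorem --- everything else is formal. Your building blocks $y\mapsto\tau(z,y)$ and $y\mapsto-\tau(y,z)$ are indeed isotone and rushing, but they are only semi-continuous and take the value $-\infty$ off $i(\{z\})$ (resp.\ $d(\{z\})$), and the step that would repair this --- a Nachbin-type monotone insertion that preserves the additive lower bound $f(y)-f(x)\ge d(x,y)$ rather than mere isotonicity --- is not a known off-the-shelf lemma; order normality of locally compact $\sigma$-compact closed preordered spaces gives Urysohn separation by isotone functions but says nothing about rushing increments. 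In the smooth setting the analogous statement (existence of steep temporal functions on stable spacetimes, \cite[Thm.\ 4.6]{minguzzi17}) is itself a substantial theorem, so asserting its abstract metric counterpart without proof leaves the forward implication unestablished. The countable convex recombination $h=\sum_n c_n h_n$ at the end is fine (after normalizing and using $\lambda\mathcal{F}\subset\mathcal{F}$ for $\lambda\ge 1$), but it presupposes the family $\{h_n\}\subset\mathcal{F}$ whose construction is exactly the missing step.
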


%
%

The main result of this section is the following converse of Theorem \ref{ckkr}

\begin{theorem} \label{mai}
On a
locally compact $\sigma$-compact stable spacetime $(X,\mathscr{T},\le,d)$, denoting with $\mathcal{R}$ the family of continuous rushing functions, we have $\mathscr{T}=\mathscr{T}_{\mathcal{R}}$, $\le=\le_{\mathcal{R}}$ and $d=d_{\mathcal{R}}$. Under second-countability we can replace the continuous rushing functions with the rushing time functions.
\end{theorem}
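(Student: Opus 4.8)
The plan is to prove the three identifications $\mathscr{T}=\mathscr{T}_{\mathcal{R}}$, ${\le}={\le}_{\mathcal{R}}$, $d=d_{\mathcal{R}}$ by first establishing the ``harder'' inclusions (that $\mathcal{R}$ is rich enough to recover each piece of structure) and then checking the trivial reverse inclusions. The trivial directions are: every continuous rushing function is continuous and isotone, so $\mathscr{T}\supseteq\mathscr{T}_{\mathcal{R}}$ and ${\le}\subseteq{\le}_{\mathcal{R}}$; and since a rushing function satisfies $f(x)+d(x,y)\le f(y)$ whenever $x\le y$, we get $d(x,y)\le f(y)-f(x)$ for all such $f$, hence $d(x,y)\le d_{\mathcal{R}}(x,y)$ after taking the infimum and the positive part (using $d(x,x)=0$ and $x\nleq y\Rightarrow d(x,y)=0$, which matches the $\max\{0,\cdot\}$ truncation in \eqref{dis}). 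So everything reduces to producing, for a given ``target'' $(x,y)$ or $(x,a)$, a continuous rushing function witnessing the reverse inequality.

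The engine for all of this is the product trick together with the representation theory already cited in the excerpt. Pass to $(X^\times,\mathscr{T}^\times,{\le^\downarrow})$, which by Definition \ref{pro} is a closed ordered space; under the standing hypotheses it is locally compact and $\sigma$-compact, and by the ``reason to include local convexity'' paragraph it is locally convex (indeed the stable-spacetime assumption is exactly what we need). Then \cite[Cor.\ 2.14]{minguzzi12d} promotes local convexity to convexity and to preorder normality, hence to quasi-uniformizability/complete order regularity; so $(X^\times,\mathscr{T}^\times,{\le^\downarrow})$ is completely regularly ordered. This gives: (a) the order $\le^\downarrow$ is represented by continuous isotone functions on $X^\times$, i.e.\ $(x,a)\le^\downarrow(y,b)$ iff $g(x,a)\le g(y,b)$ for every continuous isotone $g:X^\times\to\mathbb{R}$; and (b) $\mathscr{T}^\times$ is the initial topology of these $g$. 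The key point is that a continuous isotone $g$ on $X^\times$ that is moreover \emph{linear and decreasing of unit slope} in the $\mathbb{R}$-direction (i.e.\ $g(x,a)=f(x)-a$) corresponds exactly to a continuous rushing function $f$ on $X$: isotonicity of $g$ under $\le^\downarrow$ unwinds, via the definition $(x,a)\le^\downarrow(y,b)\Leftrightarrow x\le y\ \&\ b\le a+d(x,y)$, to $f(x)-a\le f(y)-b$ whenever $b\le a+d(x,y)$, which (taking $b=a+d(x,y)$) is precisely $f(x)+d(x,y)\le f(y)$. So the real work is a \emph{slope-normalization / straightening} lemma: given any continuous isotone $g$ on $X^\times$ separating a prescribed pair of points of $X^\times$, manufacture from it a continuous rushing function on $X$ that does the same separating job. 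For this one uses translational invariance: replace $g$ by its ``lower envelope along fibers'' $\hat g(x,a):=\inf_{c\ge 0}\big(g(x,a+c)+c\big)$ or a similar inf-convolution in the $\mathbb{R}$-variable, which is still continuous and isotone (monotonicity of inf-convolution, and the weak-stability ``no fiber in the future of a point'' condition keeps it finite), is now $1$-Lipschitz-and-decreasing in $a$ exactly, hence of the form $f(x)-a$. One then checks $f$ is continuous (as $\hat g(\cdot,0)$), isotone, and rushing, and that $\hat g$ still separates the chosen points after possibly adding a small perturbation.

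With the straightening lemma in hand the three identifications follow: for ${\le}\supseteq{\le}_{\mathcal{R}}$, suppose $x\nleq y$; then $(x,0)\nleq^\downarrow(y,0)$, so by complete order regularity on $X^\times$ there is a continuous isotone $g$ with $g(x,0)>g(y,0)$, and after straightening a continuous rushing $f$ with $f(x)>f(y)$, so $(x,y)\notin{\le}_{\mathcal{R}}$. For $d\ge d_{\mathcal{R}}$ when $x\le y$: fix any $b>d(x,y)$, so $(x,0)\not\le^\downarrow(y,b)$; get $g$ separating these, straighten to $f$ rushing with $f(x)-0>f(y)-b$, i.e.\ $f(y)-f(x)<b$; letting $b\downarrow d(x,y)$ gives $\inf_{\mathcal{R}}(f(y)-f(x))\le d(x,y)$, hence $d_{\mathcal{R}}(x,y)\le d(x,y)$. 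For $\mathscr{T}\subseteq\mathscr{T}_{\mathcal{R}}$: $\mathscr{T}^\times$ is initial for the $g$'s, hence (restricting to the $a$-fiber, and straightening) $\mathscr{T}$ is coarser than the initial topology of the straightened $f$'s, i.e.\ of $\mathcal{R}$. Finally, under second countability, one additionally invokes Theorem \ref{cmgp} (existence of a time function, equivalently a continuous utility for $\le^\downarrow$ on $X^\times$): adding a small positive multiple of such a utility to each $f$ produced above makes it strictly increasing along $<$ without destroying the rushing or separating properties, so the family of rushing \emph{time} functions suffices; a standard second-countability argument then lets one extract, if desired, a countable such family, matching the ``countable $\mathcal{F}$'' clause of Theorem \ref{ckkr}. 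I expect the straightening lemma---producing a \emph{bona fide} continuous rushing function (exact unit slope in the fiber, finite, continuous) from an arbitrary continuous isotone function on $X^\times$ while preserving separation---to be the main obstacle; the inf-convolution is morally right but its continuity and finiteness genuinely use local compactness, $\sigma$-compactness, local convexity, and the weak-stability fiber condition together, and this is exactly the point where the hypotheses of the theorem are consumed.
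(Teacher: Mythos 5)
The paper itself contains no proof of Theorem~\ref{mai}: the Remark following the definition of spacetime defers all proofs to a companion paper, so there is nothing to compare line by line. Judged on its own terms and against the sources the paper points to, your architecture is sensible and very plausibly the intended one: the easy inclusions ($\mathscr{T}\supseteq\mathscr{T}_{\mathcal{R}}$, ${\le}\subseteq{\le}_{\mathcal{R}}$, $d\le d_{\mathcal{R}}$) are correctly dispatched, and the hard direction is correctly reduced, via the product trick of Definition~\ref{pro}, to representing $\le^\downarrow$ and $\mathscr{T}^\times$ by continuous isotone functions on $X^\times$ (complete order regularity from local compactness, $\sigma$-compactness and local convexity, \cite[Cor.~2.14]{minguzzi12d}) and then converting those into rushing functions on $X$. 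The pivot ``$f$ rushing $\Leftrightarrow$ $(x,a)\mapsto f(x)-a$ isotone on $X^\times$'' is also correct.

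The genuine gap is the straightening lemma, and the construction you propose fails. With $\hat g(x,a):=\inf_{c\ge0}\big(g(x,a+c)+c\big)$ one computes $\hat g(x,a)+a=\inf_{b\ge a}\big(g(x,b)+b\big)$, which is nondecreasing in $a$ rather than constant; so $\hat g$ is merely $1$-Lipschitz and nonincreasing along the fibre, \emph{not} of exact slope $-1$, hence not of the form $f(x)-a$ and not guaranteed to yield a rushing function. Forcing exact slope by taking the infimum over all $c\in\mathbb{R}$ makes the result identically $-\infty$ for every bounded isotone $g$ --- and the Nachbin/order-normality machinery you invoke naturally supplies $[0,1]$-valued isotone functions, so this is the generic case, and the ``no fibre in the future of a point'' condition does not rescue finiteness. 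The conversion genuinely needs a different mechanism, e.g.\ realizing rushing functions through level sets of time functions on $X^\times$ (a level set meets each fibre at most once by acausality, and weak stability is what makes it a global graph over $X$), or a Geroch-type countable-sum construction carried out directly with translation-equivariant functions; that is where local compactness, $\sigma$-compactness and second countability are actually consumed, together with the preservation of the prescribed separation, which you also leave unproved. A secondary unjustified step is the transfer of local convexity from $(X,\mathscr{T},\le)$ to $(X^\times,\mathscr{T}^\times,\le^\downarrow)$: the paper asserts the equivalence of its two stability notions but does not prove it, so you cannot simply cite it. Your second-countability refinement (perturbing by a small multiple of a time function from Theorem~\ref{cmgp}, using $\mathcal{I}+\mathcal{F}\subset\mathcal{F}$ from Proposition~\ref{mqpg}) is fine once the main lemma is secured.
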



The theorem applies also to   closed ordered spaces, just set  $d=0$ so that $\mathcal{R}=\mathcal{I}$, i.e.\ the family of continuous isotone functions. This gives:

\begin{theorem} \label{mai2}
On a
locally compact $\sigma$-compact stably causal closed ordered space (causal structure) $(X,\mathscr{T},\le)$,  we have $\mathscr{T}=\mathscr{T}_{\mathcal{I}}$, $\le=\le_{\mathcal{I}}$. Under second-countability we can replace the continuous isotone functions with the  time functions.
\end{theorem}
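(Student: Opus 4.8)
The plan is to derive Theorem \ref{mai2} as a direct corollary of Theorem \ref{mai} by the device already suggested in the excerpt: given the closed ordered space $(X,\mathscr{T},\le)$, equip it with the trivial Lorentzian distance $d\equiv 0$. One first checks that $(X,\mathscr{T},\le,d)$ is indeed a stable spacetime in the sense of the Definition: $d$ is (trivially) upper semi-continuous, $x\nleq y$ certainly implies $d(x,y)=0$, the reverse triangle inequality $0+0\le 0$ holds, $d$ is finite, $d(x,x)=0$, $\le$ is an order by hypothesis, and local convexity is part of the hypothesis (the word \emph{stably causal} rather than \emph{weakly stably causal} is being used). Thus all the hypotheses of Theorem \ref{mai} are met once we also carry over local compactness and $\sigma$-compactness, which we have.

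Next I would identify the family of continuous rushing functions $\mathcal{R}$ for this particular $d$. By definition, $f$ is rushing iff $x\le y\Rightarrow f(x)+d(x,y)\le f(y)$, i.e.\ iff $x\le y\Rightarrow f(x)\le f(y)$; so $\mathcal{R}$ is exactly the set $\mathcal{I}$ of continuous isotone functions. Theorem \ref{mai} then gives $\mathscr{T}=\mathscr{T}_{\mathcal{R}}=\mathscr{T}_{\mathcal{I}}$, $\le=\le_{\mathcal{R}}=\le_{\mathcal{I}}$, and $0=d=d_{\mathcal{R}}=d_{\mathcal{I}}$ (the last equality is automatic and carries no information). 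This establishes the first sentence of the theorem.

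For the second-countable refinement, I would again invoke the corresponding clause of Theorem \ref{mai}, which permits replacing continuous rushing functions by rushing time functions. A rushing time function for $d\equiv 0$ is a continuous function that is both rushing (hence isotone, as above) and a time function, i.e.\ $x\le y$ and $x\ne y\Rightarrow f(x)<f(y)$; since the rushing condition for $d=0$ imposes nothing beyond isotonicity, this is precisely a time function in the sense defined just before Theorem \ref{cmgp}. Hence under second countability $\mathscr{T}$ and $\le$ are represented by the family of time functions, as claimed.

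The only point requiring genuine care — and the one I expect to be the main obstacle — is verifying that the degenerate choice $d\equiv 0$ does not violate any nondegeneracy built into the notion of \emph{stable spacetime} as opposed to \emph{weakly stable spacetime}: concretely, that Theorem \ref{mai} is stated for \emph{stable} spacetimes (local convexity included) and that our $(X,\mathscr{T},\le)$ with $d=0$ inherits local convexity from the hypothesis that the \emph{causal structure} $(X,\mathscr{T},\le)$ is stably causal. This is exactly the content of the sentence following Definition 2.2 and of the closing remark ``The theorem applies also to closed ordered spaces, just set $d=0$''; once that bookkeeping is confirmed, no further argument is needed beyond substituting into Theorem \ref{mai}.
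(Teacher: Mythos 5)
Your proposal is correct and is exactly the paper's own derivation: the author obtains Theorem \ref{mai2} from Theorem \ref{mai} by setting $d=0$, observing that $\mathcal{R}=\mathcal{I}$ (and rushing time functions become time functions), with the bookkeeping on local convexity handled just as you describe. Nothing further is needed.
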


We believe the representation problems we just discussed will prove important in the context of Connes' non-commutative program for the unification of fundamental forces \cite{moretti03,besnard09,franco13,franco14b,besnard15}.

\section{Conclusions and Future Outlook}
The reader should now have formed an idea of where we are heading. We have shown that every (locally compact $\sigma$-compact) stable  spacetime is generated by the family of continuous rushing functions, in the sense that topology, order, and Lorentzian distance are deduced from that family. Thus, we can conclude that the spacetime can be ultimately regarded as a family of functions over a set. Moreover, every family of functions separating points induces a stable spacetime.



Suppose we are given a (pre)ordered vector space with some additional structure, including a convex set inside the positive cone. The convex set is invariant under multiplication by constants larger than one. Under which additional conditions is the vector space isomorphic to a suitable space of continuous functions over a spacetime? The positive cone would correspond to the time functions, and the convex set to the  rushing times.



In order to approach the problem we had first to clarify the nature  of the spacetime that we hope to recover and to motivate its definition. This is precisely what we accomplished in this work.
The second part of the problem will be addressed in a forthcoming work.



\end{document}